\documentclass[preprint,12pt]{elsarticle}

\biboptions{numbers,sort&compress}

\usepackage{amssymb}
\usepackage{amsmath}
\usepackage{amsthm}

\usepackage{bbold}
\usepackage{enumitem}	
\usepackage{hyperref}
\usepackage{natbib}
\usepackage{subfig}
\usepackage{listings}
\usepackage{tikz}
\usepackage{xcolor}

\usetikzlibrary{positioning,arrows.meta}

\journal{Communications in Nonlinear Science and Numerical Simulation}

\theoremstyle{plain}%
\newcommand\beq{\begin{equation}}%
\newcommand\eeq{\end{equation}}%
\newtheorem{defi}{Definition}%
\newtheorem{theo}{Theorem}%
\newtheorem{cor}{Corollary}%
\newtheorem{lem}{Lemma}%
\theoremstyle{remark}%
\newcommand\bb{\mathbb}%
\DeclareMathOperator{\diag}{diag}%

\begin{document}

\begin{frontmatter}



\title{On a cross coupling of Rulkov neural maps}

\author[1]{Stefano Disca\corref{cor1}}
\ead{dscsfn@unife.it}
\cortext[cor1]{Corresponding author}

\affiliation[1]{organization={Department of Mathematics and Computer Science, University of Ferrara},
addressline={Via Machiavelli 30},
city={Ferrara},
postcode={44121},
country={Italy}}




\begin{abstract}
We introduce a novel coupling of Rulkov neural maps, proposing a heuristic biological interpretation for the transition to non-small values of the perturbations acting on the slow variables. We analytically prove that the coupling preserves boundedness of motion and the existence of a snap-back repeller (leading to Devaney chaos by the Marotto theorem), if they are associated to the original system. For the coupling of two standard chaotic Rulkov maps, we present numerical simulations for the orbits of the system showing the arising of a global strange attractor, whose fractal structure is strongly suggested by the computation of a non-integer Kaplan-Yorke dimension. Furthermore, we perform standard numerical studies concerning time series, Lyapunov exponents spectra, bifurcation diagrams and basins of attraction. Finally, we briefly propose a generalization of the coupling to an arbitrary number of neurons.
\end{abstract}



\begin{keyword}


coupled Rulkov neural maps \sep absorbing set \sep snap-back repeller \sep Kaplan-Yorke dimension

\end{keyword}

\end{frontmatter}



\section{Introduction}\label{sec_introduction}
The modeling of the nervous system must rely on the description of its basic element, the neuron. Since the formulation of the celebrated Hodgkin-Huxley model \cite{Hodgkin1952}, incredible efforts have been done in order to formulate precise mathematical models of brain. In the last decades, together with models based on ODEs and PDEs, interest has been gone through simple discrete maps, that could provide efficient predictions under very cheap computational performances. 

Beyond other important models \cite{Chialvo1995, Izhikevich2003}, the Rulkov map \cite{Rulkov2002} represents a remarkable example of a simple two-dimensional discrete system able to qualitatively reproduce spiking and bursting of a real neuron. The original model defined by Rulkov can be taken in the form
\beq\begin{cases}\label{mapU}
x_{n+1} = \alpha f(x_n) + y_n =: U_1(x_n, y_n) \\
y_{n+1} = y_n - \mu(x_n - \sigma) =: U_2(x_n, y_n) \,,
\end{cases}\eeq
where $\alpha > 0$, $0 < \mu \ll 1$, $\sigma \in \bb{R}$. In \eqref{mapU}, the \textit{fast variable} $x_n$ is the state variable of the system, that evolves according to the iterations of the \textit{slow variable} $y_n$. The function $f(x_n)$, together with the parameter $\alpha$, models the membrane potential acting on the neuron. The parameter $\sigma$ can be interpreted as an external current acting on the neuron, while the assumption $\mu \ll 1$ is responsible for the slow evolution of $y_n$, being the main source of spiking and bursting in the time series of $x_n$. Among the several possibilities, the particular choice $f(x_n) = \frac{1}{1 + x_n^2}$ in \eqref{mapU} gives rise to a fascinating and rich behavior, including transition to chaos \cite{Ibarz2011}.

In this work we introduce a novel coupling of Rulkov maps and analytically prove that it generally preserves the existence of an absorbing set and a snap-back repeller, if they are associated to the original system. This coupling is structurally different from the usual ones analyzed in literature and it appears preserving the main dynamical phenomena of the Rulkov map in the classical regime on the parameters, while it exhibits some interesting properties in the other cases.

The paper is organized as follows. In Section \ref{sec_model} we introduce the coupling subject of the work, proposing a heuristic biological interpretation for the transition to non-small values on the perturbations acting on the slow variables. In Section \ref{sec_theo} we present the main theorems, proving that the coupling admits an absorbing set and a snap-back repeller explicitly constructed from those associated to the original two-dimensional system. In Section \ref{sec_num} we perform some numerical experiments on the coupling of two standard chaotic Rulkov maps. In Section \ref{sec_num_attractor} we show the arising of a strange attractor, whose fractal structure is strongly suggested by the computation of the Kaplan-Yorke dimension. In Section \ref{sec_num_lyap} we perform standard numerical studies, presenting spectra of Lyapunov exponents, bifurcation diagrams and basins of attraction. In Section \ref{sec_conclusions} we state our conclusions and propose further developments of this work, also suggesting a natural generalization of the coupling to a large number of neurons.

\section{Cross coupling of Rulkov maps}\label{sec_model}
Since its formulation, there have been several efforts to generalize the Rulkov map for more than one neuron, leading to an extensive literature on coupled Rulkov maps (for a pair of examples, see \cite{Rakshit2018} and \cite{Mirzaei2022}). However, to the best of our knowledge, all the couplings analyzed in literature always start from two or more separate 2D models that are coupled through a diffusive term that takes into account the interaction between the neurons. Here, we propose a new way to couple two Rulkov maps that is structurally different from the usual couplings. The novelty of the model lies on the coupling itself and some interesting features encountered in the non-standard case of non-small perturbations acting on the slow variable.

Given two discrete maps satisfying \eqref{mapU}, we define their \textit{cross coupling} as
\beq\begin{cases}\label{mapC}
x_{n+1} = \alpha f(z_n) + y_n =: C_1(y_n, z_n) \\
y_{n+1} = y_n - \mu(x_n - \sigma) =: C_2(x_n, y_n) \\
z_{n+1} = \beta f(x_n) + \omega_n =: C_3(x_n, \omega_n) \\
\omega_{n+1} = \omega_n - \nu(z_n - \rho) =: C_4(z_n, \omega_n) \,,
\end{cases}\eeq
where $\alpha, \beta > 0$, $\mu, \nu \in (0, 1)$, $\sigma, \rho \in \bb{R}$.

In view of the numerical results presented in the work for non-small values of $\mu$, $\nu$, we propose the following heuristic biological interpretation of \eqref{mapC}.
\begin{itemize}
\item For small values of $\mu, \nu$, i.e. $\mu, \nu \ll 1$, \eqref{mapC} describes a cross coupling of two Rulkov maps, in the sense that the evolution of the fast variable of one neuron at a certain step is determined by the evolution of the other one at the previous step, while the slow variable evolves according to the associate fast variable.
\item For large values of $\mu, \nu$, i.e. $\mu, \nu \lesssim 1$, the variables $y$, $\omega$ stop acting as slow variables and \eqref{mapC} is meant to model the interaction between four neurons belonging to two distinct populations, i.e. two distinct brain regions that behave via a linear and nonlinear interaction, respectively. In particular, the four neurons are meant belonging to the frontier of these two brain regions.
\end{itemize}
A schematic representation of the previous interpretation is shown in Figure \ref{switch}. It is worth to notice that the model \eqref{mapC} does not allow decoupling of the neurons for any values on the parameters.

\begin{figure}[h!]
\centering
\begin{tikzpicture}[>=latex]
\node (A) at (0, 0)
{
$(\underbrace{x}_{ \text{fast} } \,, \underbrace{y}_{ \text{slow} })$ = \text{neuron XY}
};
\node[below=0.005cm of A]
{
$(\underbrace{z}_{ \text{fast} } \,, \underbrace{\omega}_{ \text{slow} })$ = \text{neuron ZW}
};
\node[below=2cm of A] (B)
{
\tikz{
	\begin{scope}[name prefix = top-]
	\node[draw] (A) at (0,0) {XY};
  	\end{scope}
  	\begin{scope}[name prefix = bottom-]
    	\node[draw] (A) at (2,0) {ZW};
  	\end{scope}
  	\draw[thick, <->] (top-A) -- (bottom-A);
}
};
\node[right=4cm of A] (C)
{
$(x \,, y \,, z \,, \omega) =$
};
\node[below=0.005cm of C, align=right]
{
\text{neuron $X$ + $Y$ + $Z$ + $W$}
};
\node[below=1.25cm of C] (D)
{
\tikz{
	\begin{scope}[name prefix = top-]
	\node[draw] (A) at (0,2) {X};
    	\node[draw] (B) at (2,2) {Y};
    	\draw[thick, <->] (A) -- (B);
  	\end{scope}
  	\begin{scope}[name prefix = bottom-]
    	\node[draw] (A) at (0,0) {Z};
    	\node[draw] (B) at (2,0) {W};
    	\draw[thick, <->] (A) -- (B);
  	\end{scope}
  	\draw[thick, <->] (top-A) -- (bottom-A);
}
};
\node (L) at (0, -5.25) {$\mu \,, \nu \ll 1$};
\node[right=6cm of L] (R) {$\mu \,, \nu \lesssim 1$};
\draw[very thick,->] (L.east) -- (R.west);
\end{tikzpicture}
\caption{A heuristic biological interpretation of \eqref{mapC}. For $\mu, \nu \ll 1$, $(x, y)$, $(z, \omega)$ are two fast-slow systems describing two neurons $XY, ZW$. For $\mu, \nu \lesssim 1$, $x, y, z, \omega$ are four state variables describing four neurons $X, Y, Z, W$, where $X, Z$ and $Y, W$ belong to two different brain regions.}
\label{switch}
\end{figure}

\section{Preservation of boundedness and Devaney chaos under cross coupling}\label{sec_theo}
One standard way to prove the boundedness of the orbits for a discrete dynamical system relies on the computation of a Lyapunov function, that is a continuous function that is positive, coercive and decreasing along the orbits of the system. However, there is not an algorithmic way to construct a Lyapunov function and this procedure usually relies on the specific structure of the system, often representing a difficult task. Instead of doing that, we focus on the existence of an absorbing set for the system \eqref{mapC}, that implies boundedness of the orbits.

\begin{defi}[absorbing set]
Given a discrete map $f \colon X \to X$, $x_{n+1} = f(x_n)$, a compact set $P \subset X$ is an \emph{absorbing set} if $\forall x_0 \in X$ $\exists N_0 \ge 0$ such that $x_n \in P$ $\forall n \ge N_0$.
\end{defi}

In the rest of this section, the compactness of the absorbing set is understood.

Concerning the chaotic behavior, a natural approach to study chaotic orbits of a discrete map is given by the Marotto theorem \cite{Marotto1978, Marotto2005}.
\begin{defi}[Marotto, 2005]\label{Marotto_defi}
Consider a discrete map $f \colon \bb{R}^n \to \bb{R}^n$, $x_{k+1} = f(x_k)$, $k \in \bb{N}$. Let $D f(x_k)$ be the Jacobian of $f$ with eigenvalues $\lambda_i(x_k)$. Let $B_r(y) =$ $\{ x \in \bb{R}^n \colon |x - y| < r$, $|\lambda_i(x)| > 1$ $\forall i \}$ be a repelling neighborhood of a point $y$. A point $p_0$ is a \emph{snap-back repeller} if:
\begin{itemize}
\item $f(p_0) = p_0$;
\item  $|\lambda_i(p_0)| > 1$ $\forall i = 1, \dots, n$;
\item $\exists q_0 \in B_r(p_0)$, $q_0 \ne p_0$ and $\exists k \ge 1$ such that $f^{(k)}(q_0) = p_0$;
\item $\det( Df^{(j)}(q_0) ) \ne 0$ $\forall j = 1, \dots, k$.
\end{itemize}
\end{defi}
\begin{theo}[Marotto, 1978; Marotto, 2005]\label{Marotto_theo}
If a discrete map $f$ has a snap-back repeller, then it is chaotic in the sense of Devaney \cite{Devaney1989}.
\end{theo}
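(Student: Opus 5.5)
This is Marotto's classical theorem, so the plan follows his 1978 argument, which has two parts: first build a horseshoe-type invariant set near the snap-back repeller, then deduce the Devaney properties on it.

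\emph{Step 1: local expansion and a local inverse.} Since $|\lambda_i(x)|>1$ for all $i$ on the repelling neighborhood $B_r(p_0)$, I would work in a norm adapted to $Df(p_0)$ and shrink $r$ if needed. Then $f$ is expanding on $\overline{B_r(p_0)}$, and $f(\overline{B_r(p_0)})\supset \overline{B_r(p_0)}$. Consequently the branch $g=f^{-1}$ that fixes $p_0$ is a well-defined contraction of $\overline{B_r(p_0)}$ into itself, and $g^m(x)\to p_0$ for every $x$ in the ball.

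\emph{Step 2: the homoclinic loop.} Consider the snap-back orbit $q_0, q_1=f(q_0),\dots,q_k=p_0$. By the condition $\det Df^{(j)}(q_0)\ne0$ and the inverse function theorem, $f^{k}$ is a local diffeomorphism from a small ball $V\ni q_0$, with $V\subset B_r(p_0)$ and $p_0\notin \overline V$, onto a neighborhood $W$ of $p_0$. Pull $V$ back by the contraction $g$ for $m$ steps until $g^m(\overline V)\subset W$. Then $V$ contains two disjoint compact sets $A_0$ and $A_1$: a small ball around $p_0$, and a preimage under $f^k$ of $g^m(V)$ chosen near $q_0$. Some fixed iterate $F=f^{N}$ maps each of $A_0, A_1$ homeomorphically onto a set covering $A_0\cup A_1$, and $F^{-1}$ is contracting on each piece. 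Standard symbolic dynamics then gives an $F$-invariant Cantor set $\Lambda$ on which $F$ is conjugate to the full shift on two symbols. The conjugacy uses nested intersections of contracting preimages, with uniqueness coming from contraction.

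\emph{Step 3: transfer to Devaney chaos.} On $\Lambda$, the shift is topologically transitive and has dense periodic points. By the conjugacy it also has sensitive dependence. These properties pass from $F$ to $f$ on the invariant set $\Lambda'=\bigcup_{j<N} f^j(\Lambda)$:
\begin{itemize}
\item periodic points of $F$ are periodic for $f$;
\item transitivity follows from the orbit of a transitive point of $F$;
\item sensitivity follows from that of $F$ together with uniform continuity of $f$ on the compact set $\Lambda'$.
\end{itemize}
This gives chaos in Devaney's sense on $\Lambda'$, which is how the statement is meant (chaos on an invariant subset).

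\emph{Main obstacle.} The delicate step is Step 1. Eigenvalues of modulus greater than one at every point do not by themselves imply that $f$ expands distances, which is the known gap in Marotto's 1978 version that his 2005 paper fixes. I would handle it by establishing expansion only at $p_0$, via an adapted norm with $\|Df(p_0)^{-1}\|<1$. I would then shrink $r$ so that, by continuity of $Df$, this bound persists on the ball, and I would choose $q_0$ inside that smaller ball, as permitted by the 2005 definition.
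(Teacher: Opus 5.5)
The paper does not prove this theorem. It quotes it from Marotto and attributes the upgrade to Devaney chaos to Shi--Chen. Their argument is essentially your Steps 2--3: an iterate of $f$ is coupled-expanding on two disjoint compact sets built from the homoclinic orbit through $q_0$, hence conjugate to the one-sided $2$-shift on an invariant Cantor set. You are also right to read the conclusion as chaos on a compact invariant set, rather than on all of $\mathbb{R}^n$ as the paper's definition literally says.

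\textbf{The gap.} The problem is your resolution of the ``main obstacle'': you cannot \emph{choose} $q_0$, because it is given by hypothesis. The construction needs $q_0$ to lie in the very ball $U$ on which the inverse branch $g$ contracts. The reason is that the second inverse branch of the iterate maps $U$ into a neighborhood of $q_0$, and it must map $U$ back into itself. Shrinking $r$ until the adapted-norm expansion holds may expel the given $q_0$ from $U$. A pointwise eigenvalue condition on $B_r(p_0)$, which is the definition as written in the paper, gives no way to bring it back. This is exactly the 1978 gap, not a repair of it.

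\textbf{The fix.} Take as hypothesis that $q_0$ lies in a closed ball $U$ about $p_0$, for some norm, on which $\|f(x)-f(y)\|\ge s\|x-y\|$ with $s>1$. This is what the 2005 redefinition and Shi--Chen's formulation require. Then Step 1 holds on $U$ with no shrinking, and your adapted-norm computation only serves to show that such balls exist.

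\textbf{Step 2 as written.} $V$ was chosen with $p_0\notin\overline V$, so it cannot contain a ball around $p_0$. Pulling back $V$ only gives $F(A_1)=V$, which misses $A_0$. Pull back all of $U$ instead:
\begin{itemize}
\item Pick $m$ with $g^m(U)\subset W$, set $N=m+k$, $h_0=g^{N}$ and $h_1=(f^k|_V)^{-1}\circ g^m$.
\item Both branches satisfy $f^N\circ h_i=\mathrm{id}_U$ and map $U$ into $U$. This is where $V\subset U$, i.e.\ $q_0$ in the interior of $U$, is used.
\item Their images $h_0(U)\subset\overline{B}_{s^{-N}r}(p_0)$ and $h_1(U)\subset V$ are disjoint once $m$ is large.
\item $h_0$ contracts by $s^{-N}$. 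However, $h_1$ contracts only once $Ls^{-m}<1$, where $L$ is a Lipschitz constant of $(f^k|_V)^{-1}$ on a closed ball about $p_0$ contained in $W$. Your sketch asserts this estimate without proof.
\end{itemize}
With $A_i=h_i(U)$ you get $F(A_i)=U\supset A_0\cup A_1$, and the nested-intersection conjugacy goes through.

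\textbf{Step 3.} You can drop the sensitivity bullet: the paper's definition omits it, and Banks et al.\ show it follows from the other two properties. What you should check instead is that $\Lambda'$ has no isolated points, so that a dense orbit really gives transitivity. This holds because an isolated point of $f^j(\Lambda)$ would force $F=f^{N-j}\circ f^j$ to collapse a cylinder of $\Lambda$ to a single point. That is impossible for a map conjugate to the shift.
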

Notice that the original theorem proved in \cite{Marotto1978} stated that a snap-back repeller implies chaos in the sense of Li-Yorke \cite{LiYorke1975}. However, it has been proved in \cite{ShiChen2005} that, under a different but equivalent definition, the existence of snap-back repeller in a general Banach space implies the stronger definition given in \cite{Devaney1989} (see also \cite{Banks1992}).
\begin{defi}[Devaney, 1989; Banks, 1993]
Given a metric space $X$, a map $f \colon (X, d) \to (X, d)$ is \emph{chaotic in the sense of Devaney} if:
\begin{itemize}
\item it is topologically transitive, i.e. $\forall V, W \subset \bb{R}^k$ open sets $\exists n \ge 1$ s.t. $V \cap f^{(n)}(W) \ne \emptyset$;
\item there exists a dense set of periodic points.
\end{itemize}
\end{defi}

Before going through the main theorems of this work, we present the following results.
\begin{theo}[Ortega, 2026]\label{theo_Ortega}
Consider the system
\beq\label{system_Ortega}
x_{n+1} = A x_n + b(x_n) \,, \quad x_n \in \bb{R}^n \,,
\eeq
where $A \in \bb{R}^{d \times d}$ is diagonalized with eigenvalues $\lambda_i$, $i = 1, \dots, d$. Suppose that
\beq
|\lambda_i| < 1 \quad \forall i = 1 \,, \dots \,, d
\eeq
and $\exists C > 0$ such that
\beq
\| b(x) \| \le C \quad \forall x \in \bb{R}^d \,.
\eeq
Then, there exists an absorbing set for \eqref{system_Ortega}.
\end{theo}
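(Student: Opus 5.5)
The plan is to pass to the eigenbasis of $A$, where the dynamics is a contraction plus a bounded forcing. Since $A$ is diagonalizable, write $A = S \Lambda S^{-1}$ with $\Lambda = \diag(\lambda_1, \dots, \lambda_d)$. Then change variables $u_n = S^{-1} x_n$, so that
\[
u_{n+1} = \Lambda u_n + S^{-1} b(S u_n).
\]
In the new variables the forcing is still bounded, namely $\| S^{-1} b(S u) \| \le \|S^{-1}\| C =: C'$ for all $u$. Let $\lambda := \max_i |\lambda_i| < 1$ and use the Euclidean norm, which for a possibly complex diagonal $\Lambda$ satisfies $\|\Lambda u\| \le \lambda \|u\|$. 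If some $\lambda_i$ are complex, I would either work in $\bb{C}^d$ or use the real block-diagonal form with $2\times 2$ rotation-scaling blocks; each block has operator norm $|\lambda_i|$, so the same bound holds.

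The key estimate is
\[
\|u_{n+1}\| \le \lambda \|u_n\| + C'.
\]
Iterating it gives
\[
\|u_n\| \le \lambda^n \|u_0\| + \frac{C'}{1-\lambda}.
\]
Fix any $R > C'/(1-\lambda)$, say $R = 2C'/(1-\lambda) + 1$. For each $u_0$ there is an $N_0$ such that $\lambda^{N_0}\|u_0\| \le R - C'/(1-\lambda)$, and then $\|u_n\| \le R$ for all $n \ge N_0$. The closed ball $\{ \|u\| \le R\}$ is also forward invariant: if $\|u_n\| \le R$, then $\|u_{n+1}\| \le \lambda R + C' \le R$, because $R \ge C'/(1-\lambda)$. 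Transforming back, the set $P = S\{\|u\|\le R\}$ is compact, since it is the image of a compact set under a linear map. Every orbit $x_n = S u_n$ lies in $P$ for $n \ge N_0$, so $P$ is an absorbing set in the sense of the Definition.

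No continuity of $b$ is needed, only boundedness. The only delicate point is the norm choice. In general the operator norm of $A$ can exceed $1$ even when the spectral radius is below $1$, and this is exactly why diagonalizability is used: it provides the adapted norm $\|x\|_* = \|S^{-1}x\|$ in which $A$ contracts by the factor $\lambda$. If $A$ were not diagonalizable, I would instead use a norm adapted to the Jordan form with a small $\varepsilon$ on the off-diagonal entries, which gives contraction rate $\lambda + \varepsilon < 1$. In the present setting the diagonal case suffices.
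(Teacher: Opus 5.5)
Your proposal is correct and is essentially the paper's argument: both pass to the adapted norm $\|x\|_* = \|S^{-1}x\|$, obtain $\|x_{n+1}\|_* \le \lambda \|x_n\|_* + \|S^{-1}\|C$, and take a sublevel ball of $\|\cdot\|_*$ as the absorbing set (the paper uses the $\infty$-norm on eigen-coordinates rather than the Euclidean norm, which makes no difference). Your choice of a radius $R$ strictly larger than $C'/(1-\lambda)$ is actually more careful than the paper: its ball of radius exactly $\frac{1}{1-\mu}C\|P^{-1}\|$ only captures the $\limsup$ and need not be entered in finite time, e.g.\ for $d=1$, $A=\lambda\in(0,1)$, $b\equiv C$, $x_0 > C/(1-\lambda)$, where $x_n \downarrow C/(1-\lambda)$ strictly from above.
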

\begin{proof}
\textit{The author is grateful to prof. Rafael Ortega Ríos (University of Granada) for kindly communicating the following proof.} \\From the assumptions on $A$, $\exists P \in \bb{R}^{d \times d}$ such that
\beq
A = P D P^{-1} \,, \quad D = \diag \{ \lambda_1 \,, \dots \,, \lambda_d \} \,, \quad |\lambda_i| \le \mu < 1 \,.
\eeq
We define the norm $\| \cdot \|_*$ in $\bb{R}^d$ as
\beq
\| x \|_* = \| P^{-1}x \|_\infty \,, \quad \| \xi \|_\infty = \max \{ |\xi_i| \colon i = 1 \,, \dots \,, d \} \,.
\eeq
Then,
\beq
\| A x \|_* = \| P^{-1} A x \|_\infty = \| D P^{-1} x \|_\infty \le \mu \| P^{-1} x \|_\infty = \mu \| x \|_* \,.
\eeq
Furthermore, we have
\beq
\| x_{n+1} \|_* \le \| A x_n \|_* + \| b(x_n ) \|_* \le \mu \| x_n \|_* + C \| P^{-1} \| \,.
\eeq
Then, $y_n = \| x_n \|_*$ satisfies
\beq
y_{n+1} \le \mu y_n + C \| P^{-1} \| \,,
\eeq
implying that $\lim \sup_{n \to +\infty} y_n \le \frac{1}{1 - \mu} C \| P^{-1} \|$. Finally, the set
\beq
P = \bigg\{ x \in \bb{R}^d \colon \| x \|_* \le \frac{1}{1-\mu} C \| P^{-1} \| \bigg\}
\eeq
is an absorbing set for \eqref{system_Ortega}.
\end{proof}

\begin{cor}\label{cor_Ortega}
If $f \in C^0_b(\bb{R})$ and $\mu, \nu \in (0, \frac{1}{2})$, then \eqref{mapC} has an absorbing set.
\end{cor}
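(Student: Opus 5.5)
The plan is to write \eqref{mapC} in the form \eqref{system_Ortega} of Theorem \ref{theo_Ortega}, with a linear part $A$ that collects the affine-in-state terms and a bounded nonlinear remainder $b$. With state $X_n = (x_n, y_n, z_n, \omega_n)^T$, set
\beq
A = \begin{pmatrix} 0 & 1 & 0 & 0 \\ -\mu & 1 & 0 & 0 \\ 0 & 0 & 0 & 1 \\ 0 & 0 & -\nu & 1 \end{pmatrix} \,, \quad b(X) = \begin{pmatrix} \alpha f(z) \\ \mu\sigma \\ \beta f(x) \\ \nu\rho \end{pmatrix} \,.
\eeq
Then \eqref{mapC} reads $X_{n+1} = A X_n + b(X_n)$. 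Since $f \in C^0_b(\bb{R})$, we get $\| b(X) \| \le C$ for all $X$, with for instance $C = (\alpha + \beta)\sup|f| + \mu|\sigma| + \nu|\rho|$. This verifies the boundedness hypothesis at once.

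The spectral hypothesis reduces to the two $2\times 2$ blocks, which are identical up to $\mu \leftrightarrow \nu$. The block $\begin{pmatrix} 0 & 1 \\ -\mu & 1\end{pmatrix}$ has characteristic polynomial $\lambda^2 - \lambda + \mu$ and eigenvalues $\lambda_\pm = \frac{1 \pm \sqrt{1-4\mu}}{2}$. There are two cases.
\begin{itemize}
\item If $\mu < \frac14$, the roots are real and distinct, and $0 < \lambda_- < \lambda_+ < 1$ because $\sqrt{1-4\mu} < 1$.
\item If $\mu > \frac14$, the roots are complex conjugate and distinct, with $|\lambda_\pm|^2 = \lambda_+\lambda_- = \mu < 1$.
\end{itemize}
In both cases the block is diagonalizable, possibly over $\bb{C}$, and all eigenvalues have modulus strictly less than $1$. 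I will remark that the proof of Theorem \ref{theo_Ortega} works verbatim with a complex change of basis $P$: the norm $\|P^{-1}x\|_\infty$ on $\bb{C}^d$, restricted to $\bb{R}^d$, is still a norm. So in these cases the theorem applies directly and gives the absorbing set.

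The main obstacle is the degenerate value $\mu = \frac14$ (or $\nu = \frac14$), which lies in $(0,\frac12)$. There the block has the double eigenvalue $\frac12$ and is a nontrivial Jordan block, so the diagonalizability assumption fails. I would handle this by adapting Ortega's argument rather than the statement. Write the block as $P J P^{-1}$ with $J = \begin{pmatrix} 1/2 & 1 \\ 0 & 1/2\end{pmatrix}$ and rescale by $S = \diag\{1, \varepsilon\}$, so that $S^{-1} J S = \begin{pmatrix} 1/2 & \varepsilon \\ 0 & 1/2\end{pmatrix}$ has $\infty$-operator norm $\frac12 + \varepsilon < 1$ for small $\varepsilon$. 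Using the norm $\|(PS)^{-1}x\|_\infty$, the same contraction estimate $\|AX\|_* \le \kappa \|X\|_*$ with $\kappa < 1$ follows. The recursion $y_{n+1} \le \kappa y_n + C'$ then yields the absorbing ball exactly as in the proof of Theorem \ref{theo_Ortega}. Finally, since $A$ is block diagonal, I combine the two blocks' norms by taking the maximum, which keeps a single contraction constant $\kappa < 1$ for the full $4\times 4$ system.
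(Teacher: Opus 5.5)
Your proof is correct and follows the paper's route: the same splitting $X_{n+1}=AX_n+b(X_n)$, the same eigenvalue computation, and an appeal to Theorem~\ref{theo_Ortega}. It is, however, more careful than the paper, whose proof only checks that the eigenvalues have modulus below $1$. As you note, for $\mu$ or $\nu$ in $(\frac14,\frac12)$ the eigenvalues are non-real, so the real matrix $P$ used in the proof of Theorem~\ref{theo_Ortega} does not exist, and your complex version of the norm is needed. At $\mu=\frac14$ or $\nu=\frac14$ the block is a nontrivial Jordan block, so the diagonalizability hypothesis of Theorem~\ref{theo_Ortega} fails. The paper's argument therefore does not cover these parameter values, whereas your rescaled norm does. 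You could merge all three cases by using the standard fact that a matrix with spectral radius $r<1$ satisfies $\|A\|\le r+\varepsilon<1$ in a suitable induced norm; your Jordan rescaling is exactly the proof of that fact.

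Two side remarks. First, your identity $|\lambda_\pm|^2=\mu$ shows the argument works for all $\mu,\nu\in(0,1)$, so the restriction to $(0,\frac12)$ is unnecessary. Second, when you rerun the recursion $y_{n+1}\le\kappa y_n+C'$, take a ball of radius strictly larger than $C'/(1-\kappa)$. The $\limsup$ bound does not force orbits into the closed ball of exactly that radius: already $x_{n+1}=\kappa x_n+C'$, started above its fixed point, never enters it. This slip is also present in the paper's proof of Theorem~\ref{theo_Ortega}.
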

\begin{proof}
The system \eqref{mapC} can be written as $\overline{x}_{n+1} = A \overline{x}_n + b(\overline{x}_n)$, $\overline{x}_n \in \bb{R}^4$, where
\beq
A =
\begin{pmatrix}
0 & 1 & 0 & 0 \\
-\mu & 1 & 0 & 0 \\
0 & 0 & 0 & 1 \\
0 & 0 & -\nu & 1
\end{pmatrix} \,, \quad b(\overline{x}_n) =
\begin{pmatrix}
\alpha f(z_n) \\ \mu \sigma \\ \beta f(x_n) \\ \nu \rho
\end{pmatrix} \,.
\eeq
Eigenvalues of $A$ are given by
\beq
\lambda_{1,2} = \frac{1 \pm \sqrt{1 - 4 \mu} }{2} \,, \quad \lambda_{3,4} = \frac{1 \pm \sqrt{1 - 4 \nu} }{2} \,.
\eeq
If $\mu, \nu \in (0, \frac{1}{2})$, then all eigenvalues are lower than $1$ in magnitude. Furthermore, $\exists C > 0$ such that $\| b \| \le C$ $\forall \overline{x} \in \bb{R}^4$, since $f \in C^0_b(\bb{R})$. Therefore, by Theorem \ref{theo_Ortega} the system \eqref{mapC} has an absorbing set.
\end{proof}

Thanks to Theorem \ref{theo_Ortega}, Corollary \ref{cor_Ortega} makes use of explicit computations on \eqref{mapC} providing sufficient conditions on the parameters such that the system has an absorbing set. 

\begin{lem}\label{lem_absorbing}
If the system
\beq
x_{n+1} = f(x_n) \,, \quad x_n \in \bb{R}^d
\eeq
has the absorbing set $[a, b]^d$, then the system
\beq\label{lem_absorbing_eq}
x_{n+1} = f(x_n) + c \,, \quad x_n \,, c \in \bb{R}^d
\eeq
has the absorbing set $[a - \| c \|, b + \| c \| ]^d$.
\end{lem}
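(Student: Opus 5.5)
The plan is to compare the orbit of \eqref{lem_absorbing_eq} with the image of the unperturbed map $f$, working componentwise. The key elementary observation is that each component of $c$ satisfies $|c_i| \le \|c\|$ for any of the standard norms on $\bb{R}^d$. Consequently, translating the box $[a,b]^d$ by $c$ gives $\prod_i [a + c_i, b + c_i] \subset [a - \|c\|, b + \|c\|]^d$. This enlarged box is compact, so it is an admissible absorbing set once we show that orbits of \eqref{lem_absorbing_eq} eventually remain in it.

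The steps, in order, are as follows. Fix an initial datum $x_0 \in \bb{R}^d$ and let $(x_n)$ be its orbit under $x \mapsto f(x) + c$. For every $n \ge 0$ we can write $x_{n+1} - c = f(x_n)$. The goal is therefore to show that $f(x_n) \in [a,b]^d$ for all $n$ larger than some $N_0$. Granting this, we get $x_{n+1} \in [a,b]^d + c$ for $n \ge N_0$, and the inclusion above gives $x_n \in [a - \|c\|, b + \|c\|]^d$ for all $n \ge N_0 + 1$, which is the claim.

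The main obstacle is the step $f(x_n) \in [a,b]^d$ eventually. The hypothesis only says that orbits of the \emph{unperturbed} map $x \mapsto f(x)$ eventually enter $[a,b]^d$, whereas $(x_n)$ is a different sequence. My first attempt would be to exploit how the lemma is actually used in this section. There the absorbing box comes from a map whose image is confined to $[a,b]^d$, as for the Rulkov nonlinearity $\alpha f$ with $f \in C^0_b(\bb{R})$. In that case $f(\bb{R}^d) \subset [a,b]^d$, so $f(x_n) \in [a,b]^d$ holds for every $n \ge 0$, with $N_0 = 0$.

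If one insists on the bare absorbing hypothesis, I would try to push the comparison through by noting that $x_n$ and the unperturbed orbit started at $x_0$ differ by accumulated translations. I would then try to control these by invariance of $[a,b]^d$ under $f$, which holds for absorbing sets obtained as in Theorem \ref{theo_Ortega}, since they are sublevel sets of a contracting norm. I expect this step to require exactly such an extra structural property, in addition to the stated hypothesis, and I would make that dependence explicit in the write-up.
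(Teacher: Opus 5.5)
Your diagnosis of the obstacle is right, and the gap cannot be closed: the step ``$f(x_n)\in[a,b]^d$ eventually'' cannot be established, because the lemma as stated is false. Take $d=1$, $f(x)=\lambda x$ with $\lambda\in(0,1)$, and $[a,b]=[-1,1]$. Every orbit $x_n=\lambda^n x_0$ eventually stays in $[-1,1]$, so this box is absorbing. It is also invariant, and it is precisely the set that Theorem~\ref{theo_Ortega} produces for $A=\lambda$, $b\equiv 0$, $C=1-\lambda$. The translated map $x\mapsto\lambda x+c$ has the fixed point $x^*=c/(1-\lambda)$. This point lies outside $[-1-|c|,1+|c|]$ as soon as $|c|\lambda/(1-\lambda)>1$ (e.g.\ $\lambda=\tfrac{9}{10}$, $c=1$, $x^*=10$), and the constant orbit at $x^*$ never enters the claimed box. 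Under the bare hypothesis it is even worse. For $f(x)=\max\{x-1,0\}$ the box $[0,1]$ is absorbing and invariant. Yet $x\mapsto\max\{x-1,0\}+2$ sends every $x\ge 1$ to $x+1$ and every $x<1$ to $2$, so all orbits diverge and no compact absorbing set exists. The paper's induction does not get around this. Its step ``since by hypothesis $f(x_{N_0+k+1})\in[a,b]^d$'' applies the absorbing property of the unperturbed map to the perturbed orbit, which is exactly the unjustified move you identified. It also replaces membership in $[a,b]^d$ by the norm bounds $a\le\|f(x)\|\le b$, which are not equivalent to it.

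Consequently neither of your repairs proves the statement. The invariance route is refuted by both examples. The bounded-image route, $f(\mathbb{R}^d)\subset[a,b]^d$, is valid but nearly vacuous, and it does not cover the use of the lemma in Theorem~\ref{theo_cross1}. There the map being translated is the whole Rulkov map \eqref{mapU}, whose image is unbounded (in fact all of $\mathbb{R}^2$, because of the $y_n$ terms), not the bounded nonlinearity $\alpha f$. What is true is a quantitative version in the setting of Theorem~\ref{theo_Ortega}. Replacing $b$ by $b+c$ replaces $C$ by $C+\|c\|$, so the radius $C\|P^{-1}\|/(1-\mu)$ of the absorbing $\|\cdot\|_*$-ball becomes $(C+\|c\|)\|P^{-1}\|/(1-\mu)$, with $P,\mu$ as in that proof. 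The linear example shows that an enlargement of order $\|c\|/(1-\mu)$, rather than $\|c\|$, is unavoidable. Since \eqref{mapU} and \eqref{mapC} are themselves of the form $Ax+b(x)$ with $b$ bounded, applying Theorem~\ref{theo_Ortega} directly, as in Corollary~\ref{cor_Ortega}, is the sound way to obtain what this lemma is used for.
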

\begin{proof}
Let us suppose that the system $x_{n+1} = f(x_n)$ has the absorbing set $[a, b]^d$, i.e. $\forall x_0 \in \bb{R}^d$ $\exists N_0 \ge 0$ such that $x_{n} \in [a, b]^d$ $\forall n \ge N_0$. We prove the thesis by induction on $N_0$. The induction basis for $n = N_0$ is easily verified, since
\beq\begin{split}
&\| x_{n+1} \| = \| f(x_{N_0}) + c \| \le \| f(x_{N_0}) \| + \| c \| \le b + \| c \| \,, \\
&\| x_{n+1} \| = \| f(x_{N_0}) + c \| \ge \| f(x_{N_0}) \| - \| c \| \ge a - \| c \| \,. \\
\end{split}\eeq
Let us suppose that thesis holds true up to $n = N_0 + k$; then, for $n = N_0 + k + 1$ we have
\beq\begin{split}
\| x_{n+1} \| = \| f( x_{N_0 + k + 1} ) + c \| \le \| f( x_{N_0 + k + 1} ) \| + \| c \| \le b + \| c \| \,, \\
\| x_{n+1} \| = \| f( x_{N_0 + k + 1} ) + c \| \ge \| f( x_{N_0 + k + 1} ) \| - \| c \| \le a - \| c \| \,. \\
\end{split}\eeq
since by hypothesis $f( x_{N_0 + k + 1} ) \in [a, b]^d$. Therefore, the set $[a - \| c \|, b + \| c \| ]$ is an absorbing set for the system $x_{n+1} = f(x_n) + c$.
\end{proof}

Now, we prove that, under suitable conditions, the cross coupling \eqref{mapC} has an absorbing set and a snap-back repeller that can be explicitly constructed starting from those associated to \eqref{mapU}.

\begin{theo}\label{theo_cross1}
Consider the systems \eqref{mapU} and \eqref{mapC}, with $f \in C^0_b(\bb{R})$, $\alpha, \beta > 0$, $\mu, \nu \in (0, 1)$, $\sigma, \rho \in \bb{R}$. Suppose that \eqref{mapU} has an absorbing set and $\mu = \nu$; then, \eqref{mapC} has an absorbing set.
\end{theo}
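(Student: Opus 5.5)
The plan is to reduce the cross coupling \eqref{mapC} to two copies of the planar linear part of \eqref{mapU}, each driven by a bounded forcing. Then I will transfer the absorbing set of \eqref{mapU} to \eqref{mapC} through an estimate that is uniform in the forcing.

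\textbf{Step 1 (splitting).} Write \eqref{mapC} as $\overline{x}_{n+1} = A\overline{x}_n + b(\overline{x}_n)$, with $A = \diag\{A_\mu, A_\nu\}$ block diagonal, where
\[
A_\mu=\begin{pmatrix}0&1\\-\mu&1\end{pmatrix}
\]
is exactly the linear part of \eqref{mapU} written in the variables $(x_n,y_n)$. When $\mu=\nu$ the two blocks coincide. Both planar subsystems then have the form
\[
(u,v)_{n+1} = A_\mu (u,v)_n + (\gamma g_n, \mu\tau),
\]
where $g_n$ is a sequence bounded by $\|f\|_\infty$. For the first block $g_n = f(z_n)$, $\gamma=\alpha$, $\tau=\sigma$; for the second, $g_n=f(x_n)$, $\gamma=\beta$, $\tau=\rho$. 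The coupling enters only through this bounded forcing term.

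\textbf{Step 2 (using the hypothesis on \eqref{mapU}).} The existence of an absorbing set for \eqref{mapU}, for every initial datum, should rule out an unstable linear part. The idea is that $\alpha f$ is bounded, so the orbit of \eqref{mapU} differs from that of the affine map $A_\mu(\cdot)+(0,\mu\sigma)$ only by a bounded perturbation. I will argue the following:
\begin{enumerate}[label=(\roman*)]
\item if $A_\mu$ had an eigenvalue of modulus $\ge 1$, one could pick initial data along the corresponding direction, far from the origin, whose orbit escapes every compact set, contradicting absorption;
\item with $\mu\in(0,1)$ the eigenvalues $\frac{1\pm\sqrt{1-4\mu}}{2}$ have product $\mu$, so the only marginal case is $|\lambda|=1$ when they are complex, i.e. $\mu=1$, which is excluded;
\item hence $A_\mu$ has spectral radius strictly less than $1$, and is diagonalizable except at $\mu=1/4$.
\end{enumerate}
Once this is in place, the argument of Theorem \ref{theo_Ortega} applies to $A=\diag\{A_\mu,A_\mu\}$ with the bounded term $b$, where $\|b\|\le \max(\alpha,\beta)\|f\|_\infty+\mu(|\sigma|+|\rho|)$. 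This gives an absorbing set for \eqref{mapC}. The case $\mu=1/4$ requires a Jordan block, and I will handle it with an adapted norm $\|P^{-1}x\|_\infty$, using a scaled Jordan basis so that $\|A\|_*<1$ still holds.

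\textbf{Step 3 (explicit form).} To present the set as constructed from the absorbing set of \eqref{mapU}, as the statement suggests, I will also write the absorbing set of \eqref{mapC} as the product $K_1\times K_2$. Here $K_1$ and $K_2$ are enlargements of the absorbing set of \eqref{mapU} in the spirit of Lemma \ref{lem_absorbing}: the forcing $\alpha f(z_n)$ replaces $\alpha f(x_n)$, both are bounded by $\alpha\|f\|_\infty$, and the offsets $\mu\sigma$, $\mu\rho$ are treated as the constant shift $c$. The assumption $\mu=\nu$ is what makes both blocks share the same contraction rate and the same adapted norm, so a single invariant ball works.

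\textbf{Main obstacle.} The delicate point is Step 2, namely deducing a strict linear contraction from the mere existence of an absorbing set for the nonlinear map \eqref{mapU}. A rigorous version needs a growth argument along an eigendirection that dominates the bounded nonlinearity. If that fails in the borderline diagonalizability case, I would fall back on the direct eigenvalue computation of Corollary \ref{cor_Ortega}, extended from $(0,\tfrac12)$ to all of $(0,1)$ using the modulus $\sqrt{\mu}<1$ in the complex case.
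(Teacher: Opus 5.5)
Your proof is correct, but it takes a different route from the paper.

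\textbf{The paper's route.} The paper never looks at the spectrum of the linear part. It bounds the components of \eqref{mapC} above and below by translated copies of \eqref{mapU}: the cross terms are traded for $\alpha f(x_n)$, $\alpha f(z_n)$ plus constants of order $\|f\|_\infty$, and $\mu=\nu$ is used so that both blocks have the slow rate of \eqref{mapU}. It then transfers the absorbing square of \eqref{mapU} to these translated systems via Lemma \ref{lem_absorbing}, and asserts that the orbits of \eqref{mapC} are trapped between the resulting rectangles. This is where both hypotheses are used, and the strategy is meant to rely only on the existence of an absorbing set for the one-neuron map (hence the paper's claim that it adapts to other maps).

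Its key steps, however, need structure that is not there:
\begin{itemize}
\item One-step inequalities propagate along orbits only for order-preserving maps, whereas $y_{n+1}=y_n-\mu(x_n-\sigma)$ is decreasing in $x_n$.
\item Lemma \ref{lem_absorbing} fails in general. The absorbing set $[-\varepsilon,\varepsilon]$ of $x\mapsto x/2$ does not yield $[-\varepsilon-|c|,\varepsilon+|c|]$ for $x\mapsto x/2+c$ when $|c|>\varepsilon$, since those orbits tend to $2c$.
\end{itemize}

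\textbf{Your route.} You exploit the concrete form of \eqref{mapC}: a block-diagonal linear part with spectral radius $<1$ plus a bounded forcing, handled with the adapted norm of Theorem \ref{theo_Ortega}. This is rigorous and proves more than stated. It uses neither the absorbing set of \eqref{mapU} nor $\mu=\nu$, so \eqref{mapC} has an absorbing set for every bounded $f$ and all $\mu,\nu\in(0,1)$. It also extends Corollary \ref{cor_Ortega} and covers the non-diagonalizable case $\mu=\frac14$, which Theorem \ref{theo_Ortega} as stated does not. Its only limitation is that it is tied to the form ``contracting linear part plus bounded nonlinearity''.

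\textbf{What to trim or fix.}
\begin{itemize}
\item Step 2(i) is unnecessary, and as phrased it is valid only for $|\lambda|>1$. Your ``main obstacle'' is therefore not a real one: (ii) already gives spectral radius $\frac{1+\sqrt{1-4\mu}}{2}$ for $\mu\le\frac14$ and $\sqrt{\mu}$ for $\mu>\frac14$, both $<1$.
\item The assumption $\mu=\nu$ does not matter either. With $P=\diag\{P_\mu,P_\nu\}$, the adapted norm contracts at the larger of the two block rates.
\item In Step 3, avoid the quantitative enlargements of Lemma \ref{lem_absorbing}. Any compact superset of an absorbing set is absorbing, which gives you a product form for free.
\item If $\theta<1$ is the contraction factor of $A$ in $\|\cdot\|_*$, take the ball radius strictly larger than $C\|P^{-1}\|/(1-\theta)$. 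A $\limsup$ bound alone does not force orbits into the closed ball of exactly that radius.
\end{itemize}
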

\begin{proof}
Let $A \in \bb{R}^2$ be an absorbing set for \eqref{mapU} and let $\| f \|_\infty = \sup_\bb{R} |f|$ be the norm of $f$. Without loss of generality, we can assume that $A$ is in the form $A = [a, b]^2$.
\\Let us consider the following upper control for \eqref{mapC}:
\beq\label{upper}
\begin{cases}
x_{n+1} \le \alpha f(x_n) + y_n + 2 \alpha \| f \|_\infty \\
y_{n+1} = y_n - \mu (x_n - \sigma) \\
z_{n+1} \le \alpha f(z_n) + \omega_n + (\beta + \alpha) \| f \|_\infty \\
\omega_{n+1} \le \omega_n - \mu (z_n - \sigma) + \mu | \rho -  \sigma | \,.
\end{cases}
\eeq
Given \eqref{upper}, we define the systems
\beq\label{upper_syst_xy}
\begin{cases}
x_{n+1} = \alpha f(x_n) + y_n + 2 \alpha \| f \|_\infty \\
y_{n+1} = y_n - \mu (x_n - \sigma) \,,
\end{cases}
\eeq
\beq\label{upper_syst_zw}
\begin{cases}
z_{n+1} = \alpha f(z_n) + \omega_n + (\beta + \alpha) \| f \|_\infty \\
\omega_{n+1} = \omega_n - \mu (z_n - \sigma) + \mu | \rho -  \sigma | \,.
\end{cases}
\eeq
Furthermore, we define
\beq\begin{split}
&M_1 = \| \big( 2 \alpha \| f \|_\infty \,, 0 \big) \| \,, \\
&M_2 = \| \big( (\beta + \alpha) \| f \|_\infty \,, \mu |\rho - \sigma| \big) \| \,.
\end{split}\eeq
Since $[a, b] \times [c, d]$ is an absorbing set for \eqref{mapU} and \eqref{upper_syst_xy} - \eqref{upper_syst_zw} are just translations of \eqref{mapU}, thanks to Lemma \ref{lem_absorbing} the sets
\beq\label{rect_upper_xy}
[ a - M_1, b + M_1 ]^2 \,,
\eeq
\beq\label{rect_upper_zw}
[ a - M_2, b + M_2 ]^2
\eeq
are absorbing sets for \eqref{upper_syst_xy} and \eqref{upper_syst_zw}, respectively. Since the orbits $(x_n, y_n)$ of \eqref{mapC} are globally controlled by the orbits of \eqref{upper_syst_xy}, the motion of \eqref{mapC} on the plane $x-y$ is confined in the region below and on the left of the rectangular domain \eqref{rect_upper_xy}; the same statement holds true for the orbits $(z_n, \omega_n)$ of \eqref{mapC} and the rectangular domain \eqref{rect_upper_zw}.
\\Let us consider the following lower control for \eqref{mapC}:
\beq\label{lower}
\begin{cases}
x_{n+1} \ge \alpha f(x_n) + y_n + \alpha ( \inf{f} - \| f \|_\infty ) \\
y_{n+1} = y_n - \mu (x_n - \sigma) \\
z_{n+1} \ge \alpha f(z_n) + \omega_n + \beta \inf{f} - \alpha \| f \|_\infty  \\
\omega_{n+1} \ge \omega_n - \mu (z_n - \sigma) - \mu | \rho -  \sigma | \,.
\end{cases}
\eeq
Given \eqref{lower}, we define the systems
\beq\label{lower_syst_xy}
\begin{cases}
x_{n+1} = \alpha f(x_n) + y_n + \alpha ( \inf{f} - \| f \|_\infty ) \\
y_{n+1} = y_n - \mu (x_n - \sigma) \,,
\end{cases}
\eeq
\beq\label{lower_syst_zw}
\begin{cases}
z_{n+1} = \alpha f(z_n) + \omega_n + \beta \inf{f} - \alpha \| f \|_\infty \\
\omega_{n+1} = \omega_n - \mu (z_n - \sigma) - \mu | \rho -  \sigma | \,.
\end{cases}
\eeq
Furthermore, we define
\beq\begin{split}
&M_3 = \| \big( \alpha ( \inf{f} - \| f \|_\infty) \,, 0 \big) \| \,, \\
&M_4 = \| \big( \beta \inf{f} - \alpha \| f \|_\infty \,, - \mu | \rho -  \sigma | \big) \| \,.
\end{split}\eeq
Since $[a, b] \times [c, d]$ is an absorbing set for \eqref{mapU} and \eqref{lower_syst_xy} - \eqref{lower_syst_zw} are just translations of \eqref{mapU}, thanks to Lemma \ref{lem_absorbing} the sets
\beq\label{rect_lower_xy}
[ a - M_3, b + M_3 ]^2 \,,
\eeq
\beq\label{rect_lower_zw}
[ a - M_4, b + M_4 ]^2
\eeq
are absorbing sets for \eqref{lower_syst_xy} and \eqref{lower_syst_zw}, respectively. Since the orbits $(x_n, y_n)$ of \eqref{mapC} globally control the orbits of \eqref{lower_syst_xy}, the motion of \eqref{mapC} on the plane $x-y$ is confined in the region above and on the right of the rectangular domain \eqref{rect_lower_xy}; the same statement holds true for the orbits $(z_n, \omega_n)$ of \eqref{mapC} and the rectangular domain \eqref{rect_lower_zw}.
\\Finally, the set $I_x \times I_y \times I_z \times I_\omega \in \bb{R}^4$, where
\beq\begin{split}
&I_x = I_y = [a - M_3, b + M_1] \,, \\
&I_z = I_\omega = [a - M_4, b + M_2 ] \,,
\end{split}\eeq
is an absorbing set for \eqref{mapC}.
\end{proof}

\begin{theo}\label{theo_cross2}
Consider the systems \eqref{mapU} and \eqref{mapC}, with $f \in C^0_b(\bb{R})$, $\alpha, \beta > 0$, $\mu, \nu \in (0, 1)$, $\sigma, \rho \in \bb{R}$. Suppose that \eqref{mapU} has a snap-back repeller, $\sigma = \rho$ and \eqref{mapC} has a repelling fixed point; then, \eqref{mapC} has a snap-back repeller.
\end{theo}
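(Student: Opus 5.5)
The plan is to build the snap-back repeller of \eqref{mapC} directly from the one of \eqref{mapU}, using $\sigma = \rho$ to align the fixed points. For \eqref{mapU}, the fixed points satisfy $x^* = \sigma$ and $y^* = \sigma - \alpha f(\sigma)$. For \eqref{mapC}, they satisfy $x^* = \sigma$, $z^* = \rho = \sigma$, $y^* = \sigma - \alpha f(\sigma)$ and $\omega^* = \sigma - \beta f(\sigma)$. So the candidate is
\[ P_0 = (\sigma,\ \sigma - \alpha f(\sigma),\ \sigma,\ \sigma - \beta f(\sigma)). \]
By hypothesis \eqref{mapC} has a repelling fixed point. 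Since the fixed point of \eqref{mapC} is unique when $\sigma=\rho$, that point is $P_0$, so the first two conditions of Definition \ref{Marotto_defi} hold for $P_0$ and some repelling ball $B_r(P_0)$.

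The remaining task is to find a snap-back point $Q_0 \in B_r(P_0)$, $Q_0 \neq P_0$, with $C^{(k)}(Q_0) = P_0$ and nonsingular Jacobians along the orbit.

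\textbf{Step 1: the diagonal.} First check whether the diagonal-type set $\{z = x\}$ with the slow variables suitably related is invariant, so that \eqref{mapC} reduces to a copy of \eqref{mapU} there. This is exact only when $\alpha=\beta$; with $\alpha\neq\beta$ one can shift $\omega$ by the constant $(\alpha-\beta)f$-difference, which is not constant in general.

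\textbf{Step 2: preimages by explicit back-substitution.} Rather than rely on Step 1, the preferred route is to construct preimages directly. The map \eqref{mapC} is explicitly invertible in the slow and fast variables: given $(x_{n+1},y_{n+1},z_{n+1},\omega_{n+1})$, one has $y_n = x_{n+1}-\alpha f(z_n)$ and $\omega_n = z_{n+1} - \beta f(x_n)$. Combined with $y_{n+1} = y_n - \mu(x_n-\sigma)$ and $\omega_{n+1}=\omega_n-\nu(z_n-\sigma)$, this gives two scalar equations in $(x_n,z_n)$:
\[ y_{n+1} = x_{n+1} - \alpha f(z_n) - \mu(x_n - \sigma), \qquad \omega_{n+1} = z_{n+1} - \beta f(x_n) - \nu(z_n-\sigma). \]
Take the snap-back point $q_0=(\bar x,\bar y)$ of \eqref{mapU}, with orbit $q_0,\dots,q_k=p_0$. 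The idea is to mimic its orbit by building a backward orbit from $P_0$ in which the $x$-components copy those of the \eqref{mapU} orbit. At each step the $z$-component is then determined by solving the equations above, which are affine in $z_n$ once $f(z_n)$ is treated as given. The solvability of \eqref{mapU}'s backward step, via the preimage sequence it provides, is what is transferred to \eqref{mapC}.

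\textbf{Step 3: nondegeneracy.} The Jacobian of \eqref{mapC} has determinant $\mu\nu - \alpha\beta f'(x)f'(z)$ up to sign, computable by expanding the $4\times4$ matrix with its fast/slow block structure. The nonsingularity condition $\det(DC^{(j)}(Q_0))\neq 0$ then reduces, via the chain rule, to $\alpha\beta f'(x_n)f'(z_n)\neq\mu\nu$ along the finite orbit. This can be ensured by a small perturbation of $Q_0$ inside $B_r(P_0)$, if needed, using the implicit function theorem.

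\textbf{Main obstacle.} The hard step is showing that the backward orbit built in Step 2 exists and starts inside $B_r(P_0)$. The fast variables are coupled crosswise, so the \eqref{mapU} preimage structure does not transfer verbatim. I would first try the symmetric case, choosing $z_n = x_n$ along the orbit and checking consistency, and then use continuity of the preimage construction in $\beta$ around $\beta=\alpha$ to extend.
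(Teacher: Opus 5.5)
Your proposal has a genuine gap: it never constructs the returning point $Q_0$, and that construction is the whole content of the theorem once the fixed point is identified. Your $P_0$ and your formula $\det DC=\mu\nu-\alpha\beta f'(x)f'(z)$ are correct.

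Step 2 cannot work as described. Given the image point, your two equations fix $(x_n,z_n)$ jointly, with no free parameter. If you impose $x_n=\bar x_n$ from the orbit of \eqref{mapU}, the $\omega$-equation already determines $z_n=\sigma+(z_{n+1}-\omega_{n+1}-\beta f(\bar x_n))/\nu$. The $y$-equation then becomes an extra constraint $\alpha f(z_n)=x_{n+1}-y_{n+1}-\mu(\bar x_n-\sigma)$, which generically fails; treating $f(z_n)$ as known is circular. Nor is \eqref{mapC} invertible: a snap-back repeller requires $C^{(k)}(Q_0)=C^{(k)}(P_0)$ with $Q_0\neq P_0$. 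Back-substitution only shows that preimages always exist: eliminating $z_n$ leaves a scalar equation in $x_n$, solvable by the intermediate value theorem because $f$ is bounded. The real difficulty is to prove that some nontrivial backward branch from $P_0$ reaches a point of $B_r(P_0)\setminus\{P_0\}$, and nothing in your proposal does this. Step 3 also fails: $C^{(k)}(Q)=P_0$ is four equations in four unknowns, so you cannot move $Q_0$ to restore $\det\neq 0$ while keeping it a returning point.

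Your fallback (the invariant diagonal, then continuation in $\beta$) is essentially the paper's route. For $\alpha=\beta$, $\mu=\nu$ the paper takes $\tilde q_0=(q_0,q_0)$. For $\alpha\neq\beta$ it shifts the fixed point by $(0,0,0,(\alpha-\beta)f(\sigma))$, offsets the $(z,\omega)$-components of the returning point by $\varepsilon_1,\varepsilon_2$, and invokes the implicit function theorem. Making this rigorous needs more than the planar hypothesis gives.

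\begin{itemize}
\item The diagonal ansatz $z_n=x_n=\bar x_n$ is consistent only if $(\alpha-\beta)(f(\bar x_{n+1})-f(\bar x_n))=(\mu-\nu)(\bar x_n-\sigma)$ along the orbit, i.e. essentially only in the symmetric case.
\item In that case $\det DC=(\mu+\alpha f'(\bar x_j))(\mu-\alpha f'(\bar x_j))$. The snap-back property of \eqref{mapU} controls only the first factor, so you must add the transverse condition $\alpha f'(\bar x_j)\neq\mu$. It is needed both for Marotto's determinant condition and for invertibility of $DC^{(k)}$ in the continuation.
\item You also need $(q_0,q_0)$ inside a four-dimensional expanding ball. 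You can arrange this by replacing $q_0$ with its backward iterates under the local inverse, which stay on the diagonal and tend to $P_0$.
\end{itemize}

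Even then, the continuation gives the conclusion only for $(\beta,\nu)$ near $(\alpha,\mu)$, not for arbitrary $\beta,\nu$ as stated, and it needs $f\in C^1$, not just $C^0_b$. The paper's own perturbative step does not fill this hole either. It identifies the first two components of $C^{(k)}(\tilde q_0)$ with those of the planar orbit, ignoring that $x_{n+1}=\alpha f(z_n)+y_n$ feeds the perturbed $z_n$ back into the $(x,y)$-components.
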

\begin{proof}
In the following, ``returning point'' stands for the point $q_0$ associated to the fixed point $p_0$ as in Definition \ref{Marotto_defi}. We make use of the existence of a snap-back repeller for the 2D system \eqref{mapU}, as shown in \cite{GeCao2021}, and the fact that the open ball in $\bb{R}^n$ is homeomorphic to the Cartesian product of open balls in $\bb{R}^k$, $k < n$, so
\beq
\underbrace{B_r(p)}_{\in \bb{R}^4} \approx \underbrace{B_{r_1}(p)}_{\in \bb{R}^2} \times \underbrace{B_{r_2}(p)}_{\in \bb{R}^2} \,.
\eeq
Let $p_0$ be the a snap-back repeller for \eqref{mapU}, hence $\exists B = B_r(p_0) = \{ x \in \bb{R}^2 \colon |x-p_0| < r$, $|\lambda_i(x)| > 1$ $\forall i = 1, 2 \}$, $\exists q_0 \in B$, $q_0 \ne p_0$ and $\exists k \ge 1$ such that $R^{(k)}(q_0) = p_0$. Suppose that $\sigma = \rho$, $\mu = \nu$. If $\alpha = \beta$, the fixed point of \eqref{mapC} is given by
\beq
\tilde p_0 = (p_0, p_0) \,.
\eeq
and $\tilde p_0$ is a snap-back repeller for \eqref{mapC}. Indeed, $\tilde p_0$ is fixed for \eqref{mapC}, it is repeller by hypothesis and we can construct the repelling neighborhood $\tilde B = \tilde B(\tilde p_0) = B_r(p_0) \times B_r(p_0)$, immediately obtaining
\beq
C^{(k)}(\tilde q_0) = \tilde p_0 \,,
\eeq
where $\tilde q_0 = (q_0, q_0)$. Now, we set $\alpha \ne \beta$ and the fixed point of \eqref{mapC} is given by
\beq
\tilde p_0 = (p_0, p_0) + (0, 0, 0, \varepsilon) \,, \quad \varepsilon = (\alpha - \beta) f(\sigma) \,.
\eeq
The point $\tilde p_0$ is repeller for \eqref{mapC} by hypothesis and we can construct the repelling neighborhood
\beq
\tilde B = \tilde B(\tilde p_0) = B_r(p_0) \times B_{r+\varepsilon}(p_0) \,.
\eeq
Let $q_0 = (x, y)$ be the returning point associated to $p_0 = (p_{0x}, p_{0y})$ and let $\tilde q_0 = (x, y, x + \varepsilon_1, \varepsilon_2)$ be the returning point associated to $\tilde p_0 = (p_{0x}, p_{0y}, p_{0x}, p_{0y} + \varepsilon)$, where $\varepsilon_1, \varepsilon_2 \in \bb{R}$. Then, we have
\beq\begin{split}
C^{(k)}(\tilde q_0) &=
\begin{pmatrix}
C_1^{(k)}(z, y) \\ C_2^{(k)}(x, y) \\ C_3^{(k)}(x, \omega) \\ C_4^{(k)}(z, \omega)
\end{pmatrix} = 
\begin{pmatrix}
C_1^{(k)}(x, y) \\ C_2^{(k)}(x, y) \\ C_3^{(k)}(x + \varepsilon_1, y + \varepsilon_2) \\ C_4^{(k)}(x + \varepsilon_1, y + \varepsilon_2)
\end{pmatrix} = \\
&= \begin{pmatrix}
p_{0x} \\ p_{0y} \\ C_3^{(k)}(x + \varepsilon_1, y + \varepsilon_2) \\ C_4^{(k)}(x + \varepsilon_1, y + \varepsilon_2)
\end{pmatrix} \,.
\end{split}\eeq
We look for $\varepsilon_1, \varepsilon_2$ such that
\beq\begin{cases}
C_3^{(k)}(x + \varepsilon_1, y + \varepsilon_2) = p_{0x} \\
C_4^{(k)}(x + \varepsilon_1, y + \varepsilon_2) = p_{0y} + \varepsilon \,.
\end{cases}\eeq
The second equality is equivalent to
\beq
R_2^{(k)}(x + \varepsilon_1, y + \varepsilon_2) = p_{0y} + \varepsilon
\eeq
and since $R_2^{(k)}(x, y) = p_{0y}$ by hypothesis, we can choose $\varepsilon_1$ and $\varepsilon_2 = \varepsilon_2( \varepsilon_1 )$ such that the equality is solved inside $\tilde B$. Analogously, the first equality reads
\beq
\bigg( \beta f(x + \varepsilon_1) + y + \varepsilon_2 \bigg)^{(k)} = p_{0x} \,.
\eeq
Again, since $R_2^{(k)}(x, y) = p_{0x}$ by hypothesis, we can choose $\varepsilon_1$ and $\varepsilon_2 = \varepsilon_2( \varepsilon_1 )$ such that the equality is solved inside $\tilde B$. \\By combining both cases, we have a unique well-defined couple of $\varepsilon_1$, $\varepsilon_2$ such that $\tilde q_0$ is the returning point associated to $\tilde p_0$ and this is ensured by the implicit function theorem, given the hypothesis underlying the system \eqref{mapU}. Therefore, $\tilde p_0$ is a snap-back repeller for \eqref{mapC}. The same reasoning can be extended to the case $\mu \ne \nu$.
\end{proof}
\begin{cor}\label{cor_cross2}
Suppose that \eqref{mapU} and \eqref{mapC} satisfy the hypothesis of Theorem \ref{theo_cross2}. Then, \eqref{mapC} has a snap-back repeller for sufficiently small values on $\sigma - \rho$.
\end{cor}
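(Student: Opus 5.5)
This is a perturbation argument: Marotto's snap-back repeller conditions are open conditions, so a snap-back repeller of \eqref{mapC} at $\sigma=\rho$ should persist for $\rho$ near $\sigma$. Fix all parameters except $\rho$, and write $C_\rho$ for the map \eqref{mapC}. By Theorem \ref{theo_cross2}, at $\rho=\sigma$ there is a snap-back repeller $\tilde p_0$. It comes with a returning point $\tilde q_0$ and an integer $k$ such that $C_\sigma^{(k)}(\tilde q_0)=\tilde p_0$ and $\det DC_\sigma^{(j)}(\tilde q_0)\ne 0$ for $j=1,\dots,k$. We also have $\tilde q_0\in\tilde B$, a ball on which every eigenvalue of $DC_\sigma$ has modulus greater than $1$. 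The plan is to continue each of these objects in $\rho$. We need $f\in C^1$ near the relevant points, which holds for the Rulkov choice $f(x)=1/(1+x^2)$.

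\emph{Continuation of the fixed point.} The fixed point equations of \eqref{mapC} are explicit: $x=\sigma$, $z=\rho$, $y=\sigma-\alpha f(\rho)$, $\omega=\rho-\beta f(\sigma)$. So $\tilde p_0(\rho)$ depends continuously (indeed $C^1$) on $\rho$. The eigenvalues of $DC_\rho(\tilde p_0(\rho))$ depend continuously on $\rho$. Their moduli exceed $1$ at $\rho=\sigma$, so they still exceed $1$ for $|\rho-\sigma|$ small. Shrinking the radius if necessary, we get a uniform $r>0$ such that $B_r(\tilde p_0(\rho))$ is a repelling neighbourhood for every $\rho$ close to $\sigma$. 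This uses continuity of $(x,\rho)\mapsto DC_\rho(x)$ and compactness of a closed ball slightly inside $\tilde B$.

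\emph{Continuation of the returning point.} This is the main step. Let $\tilde q_1=C_\sigma(\tilde q_0)$, and note $k\ge 1$. Consider
\[
G(q,\rho)=C_\rho^{(k-1)}(q)-\tilde p_0(\rho)
\]
near $(\tilde q_1,\sigma)$. We have $G(\tilde q_1,\sigma)=0$. Also $D_qG=DC_\sigma^{(k-1)}(\tilde q_1)$, which is invertible because $\det DC^{(k)}_\sigma(\tilde q_0)=\det DC_\sigma^{(k-1)}(\tilde q_1)\det DC_\sigma(\tilde q_0)\ne0$. (If $k=1$, take $G(q,\rho)=q-\tilde p_0(\rho)$ instead.) The implicit function theorem then gives $\tilde q_1(\rho)$ with $C_\rho^{(k-1)}(\tilde q_1(\rho))=\tilde p_0(\rho)$. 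Next we solve $C_\rho(q)=\tilde q_1(\rho)$ near $\tilde q_0$. Since $DC_\sigma(\tilde q_0)$ is invertible, the implicit function theorem applies again and yields $\tilde q_0(\rho)$, continuous in $\rho$.

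\emph{Checking the definition.} Continuity gives $\tilde q_0(\rho)\in B_r(\tilde p_0(\rho))$ and $\tilde q_0(\rho)\ne\tilde p_0(\rho)$ for $\rho$ near $\sigma$. The determinants $\det DC_\rho^{(j)}(\tilde q_0(\rho))$ are continuous in $\rho$ and nonzero at $\rho=\sigma$, so they remain nonzero for $j=1,\dots,k$. Hence $\tilde p_0(\rho)$ satisfies Definition \ref{Marotto_defi}, and Theorem \ref{Marotto_theo} gives the conclusion.

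The delicate points are, first, the invertibility needed for the implicit function theorem, which is exactly the determinant condition in Definition \ref{Marotto_defi}, and second, keeping the repelling neighbourhood uniform in $\rho$. A further caveat is needed if $\tilde q_0$ lies on the boundary of the repelling region, for instance if it is only known to lie in an open set where $|\lambda_i|>1$. In that case I would first replace $\tilde B$ by a slightly smaller compact ball that still contains $\tilde q_0$ in its interior, and only then run the continuity argument.
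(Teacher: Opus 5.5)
Your argument is correct, provided $f\in C^1$ near the points involved. The paper also assumes this tacitly, since Definition~\ref{Marotto_defi} involves Jacobians.

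The overall strategy matches the paper's proof: treat $\rho\neq\sigma$ as a small perturbation of the case $\rho=\sigma$ handled by Theorem~\ref{theo_cross2}. The paper disposes of the persistence step in one line, by quoting the Chen--Huang theorem on persistence of snap-back repellers under small $C^1$ perturbations in Banach spaces. A change of $\rho$ is a constant shift, so it trivially qualifies as such a perturbation. You instead prove the persistence directly:
\begin{itemize}
\item you continue the fixed point $(\sigma,\sigma-\alpha f(\rho),\rho,\rho-\beta f(\sigma))$ explicitly;
\item you continue the returning point by the implicit function theorem, where the condition $\det DC^{(j)}\neq 0$ supplies exactly the invertibility needed;
\item you use openness for the remaining conditions.
\end{itemize}
The citation is shorter and covers far more general perturbations. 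Your version is self-contained, shows which of Marotto's conditions is used where, and produces continuous branches $\tilde p_0(\rho)$, $\tilde q_0(\rho)$.

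It can be streamlined. Varying $\rho$ only adds the constant $\nu(\rho-\sigma)$ to $C_4$, so $DC_\rho\equiv DC_\sigma$. Hence the eigenvalue and determinant conditions at a given point do not depend on $\rho$ at all. A single application of the implicit function theorem to $(q,\rho)\mapsto C_\rho^{(k)}(q)-\tilde p_0(\rho)$ at $(\tilde q_0,\sigma)$ then suffices: its $q$-derivative $DC_\sigma^{(k)}(\tilde q_0)$ is invertible by the case $j=k$. This replaces your two-step construction through $\tilde q_1$.

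Two cautions. First, state $C^1$ regularity as an added assumption rather than justifying it by $f(x)=1/(1+x^2)$. The paper itself remarks that for this $f$ the system \eqref{mapC} never has a repelling fixed point, so the corollary says nothing in that case.

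Second, this obstruction is not specific to the Rulkov $f$. At any point, the characteristic polynomial of $DC$ is
\[
p(\lambda)=(\lambda^2-\lambda+\mu)(\lambda^2-\lambda+\nu)-c\,(\lambda-1)^2 ,\qquad c=\alpha\beta f'(x)f'(z),
\]
so it depends on $f$ only through one real number. For $\mu,\nu\in(0,1)$ it always has a root in the closed unit disc:
\begin{itemize}
\item if $c>\mu\nu$, then $p(0)=\mu\nu-c<0<\mu\nu=p(1)$ gives a root in $(0,1)$;
\item if $\mu\nu-1\le c\le\mu\nu$, the product of the roots, $p(0)$, has modulus at most $1$;
\item if $c<\mu\nu-1$, the Jury test applied to $\lambda^4p(1/\lambda)$ fails.
\end{itemize}
So the repelling-fixed-point hypothesis inherited from Theorem~\ref{theo_cross2} is never met, for any $f$. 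Your reasoning is valid, but the statement it establishes holds only vacuously.
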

\begin{proof}
If $\sigma \ne \rho$, \eqref{mapC} can be written as the case $\sigma = \rho$ plus a constant perturbation on $y_{n+1}$, $\omega_{n+1}$. Then, the proof immediately follows by Theorem \ref{theo_cross2} and the persistence of a snap-back repeller under continuously differentiable small perturbations in Banach spaces \cite{ChenHuang2011}.
\end{proof}

Apart from the request on the repelling nature of the fixed point for the coupled system, Theorem \ref{theo_cross1} - \ref{theo_cross2} and Corollary \ref{cor_cross2} are general among all continuous and bounded choices on the function $f$ that defines the Rulkov map. Furthermore, the proof of Theorem \ref{theo_cross1} mainly relies on the hypothesis that the original system has an absorbing set. Therefore, the proof can be easily adapted to other kind of discrete maps of biological interest also coming from different frameworks, such as the Guevara map \cite{Guevara1982} for cardiac arrhythmia.

It is important to remark that Theorem \ref{theo_cross2} does not apply to the standard chaotic Rulkov map, i.e. for $f(x) = \frac{1}{1+x^2}$, since it can be easily proved that the cross coupling for this choice on $f$ has not a repelling fixed point for any value on the parameters. However, for this formulation of the problem we provide several numerical experiments generally showing the preservation of chaotic behavior, that are presented in the following section.

\section{Numerical experiments}\label{sec_num}
In this section we focus on the cross coupling of two chaotic Rulkov maps taking $f(x) = \frac{1}{1 + x^2}$ in \eqref{mapC}, that leads to \beq\begin{cases}\label{Rulkov4D}
x_{n+1} = \frac{\alpha}{1 + z_n^2} + y_n \\
y_{n+1} = y_n - \mu(x_n - \sigma) \\
z_{n+1} = \frac{\beta}{1 + x_n^2} + \omega_n \\
\omega_{n+1} = \omega_n - \nu(z_n - \rho) \,.
\end{cases}\eeq
We present some numerical results for the system \eqref{Rulkov4D}. For all the following numerical simulations we always take random initial conditions $x_0, y_0, z_0, \omega_0 \in [0, 1]$ and only specify the values assumed on the parameters of the system. In the following, ``MLE'' stands for maximum Lyapunov exponent.

\subsection{A global strange attractor}\label{sec_num_attractor}
We find for a wide sets of parameters the arising of a strange attractor, that is present in all 2D and 3D projection planes of the phase space. A representative example on the plane $(x, z)$ is shown in Figure \ref{attractor_tot}, where the first $10 \%$ of transient points are discarded to better visualize the final attractor reached by the system.

\begin{figure}[h!]
\centering
\subfloat[][\label{attractor}]
{\includegraphics[width=.45\textwidth]{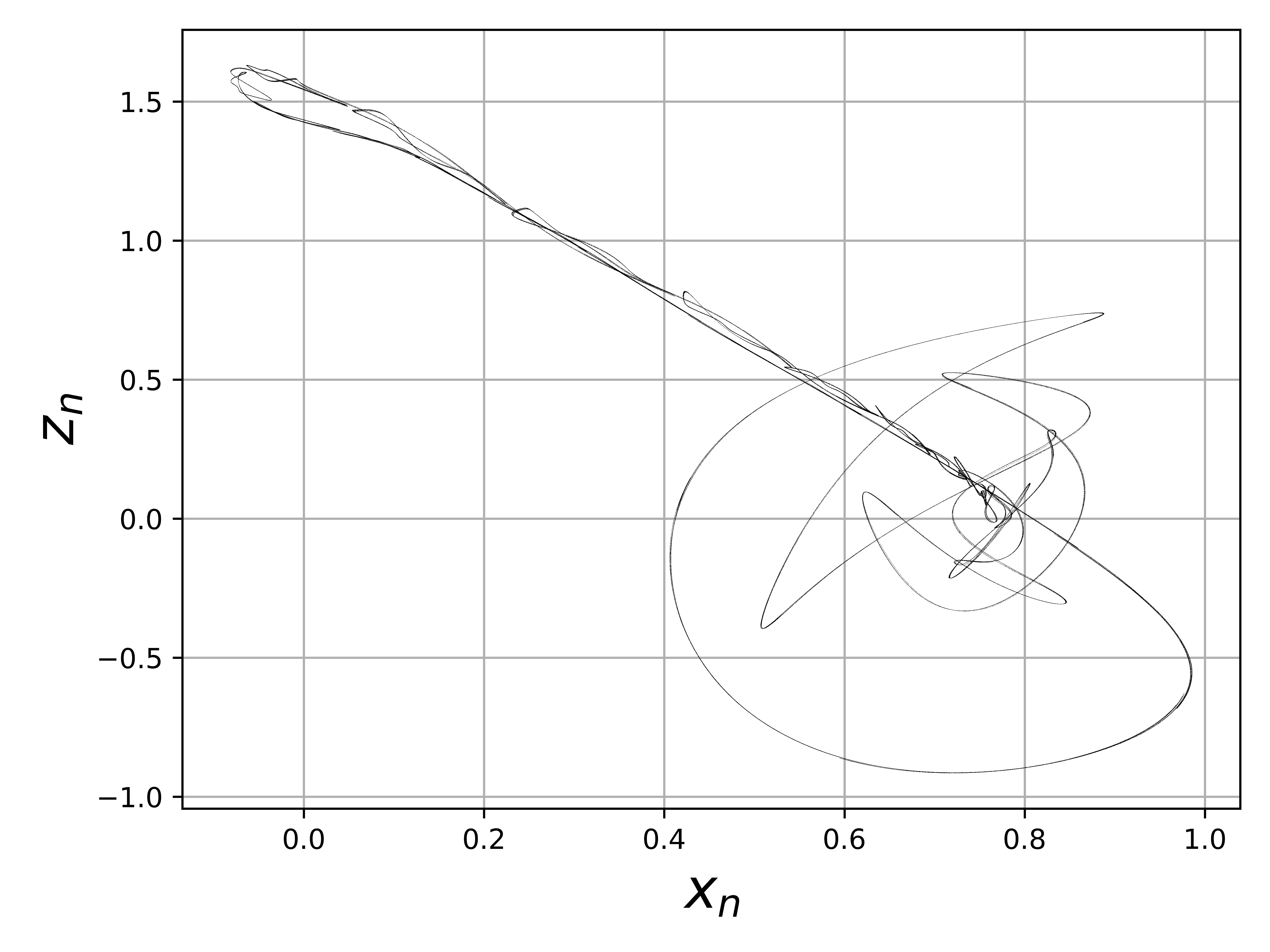}} \quad
\subfloat[][\label{attractor_zoom}]
{\includegraphics[width=.45\textwidth]{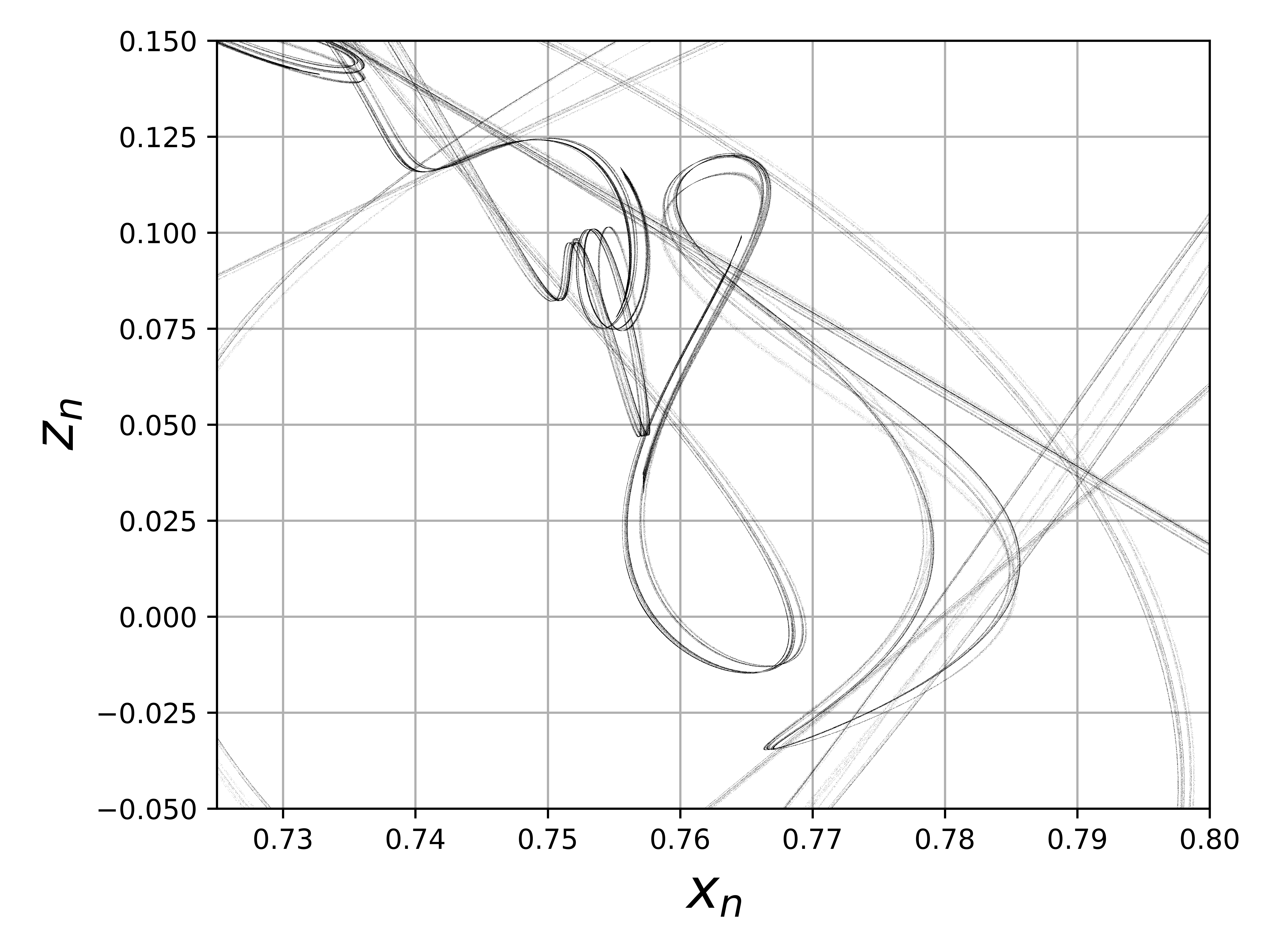}}
\caption{a) Strange attractor for the system \eqref{Rulkov4D}. Parameters: $\alpha = 1.1$, $\beta = 4$, $\mu = \nu = 0.1$, $\sigma = \rho = 0.5$. The simulation run for $1.000.000$ iterations. b) Magnification of a) for $x \in [0.725, 0.8]$, $z \in [-0.05, 0.15]$. The simulation run for $10.000.000$ iterations.}
\label{attractor_tot}
\end{figure}

The strangeness of the attractor is strongly suggested by the computation of the Kaplan-Yorke dimension \cite{Kaplan1979}.

\begin{defi}[Kaplan-Yorke, 1979]
Suppose that the dynamical system $f \colon \bb{R}^n \to \bb{R}^n$ has an attractor and let $\lambda_1 \ge \lambda_2 \ge \dots \ge \lambda_n$ be its Lyapunov exponents. Let $j \in \bb{N}$ be the largest positive integer such that
\[
\sum_{i=1}^j \lambda_i \ge 0 \,, \quad \sum_{i=1}^{j+1} \lambda_i < 0 \,.
\]
The \emph{Kaplan-Yorke dimension} of the attractor is defined as
\[
d_{KY} = j + \frac{ \sum_{i=1}^j \lambda_i }{ |\lambda_{j+1}| } \,.
\]
\end{defi}

The original Kaplan-Yorke conjecture stated that $d_{KY}$ coincides with the Hausdorff dimension for any attractor. Even though it has been proved that this is not generally true, its computation has become the standard in the numerical study of chaotic systems, since it provides an extremely accurate estimate of the Hausdorff dimension, especially for high dimension systems for which the computation of the Hausdorff dimension via box-counting is often prohibitive.

In Figure \ref{kaplan_dim} the Kaplan-Yorke dimension of \eqref{Rulkov4D} is computed for $\beta = 4$, $\mu = \nu = 0.1$, $\sigma = \rho = 0.5$ and $500$ values for $\alpha \in [1, 2]$. We can argue that when $d_{KY}$ is relatively far from an integer value, then we may expect a non-integer Hausdorff dimension of the attractor. In particular, for the attractor shown in Figure \ref{attractor_tot} the Kaplan-Yorke dimension is $d_{KY} \in [1.79, 1.82]$.

\begin{figure}[h!]
\centering
\includegraphics[width=8.5cm]{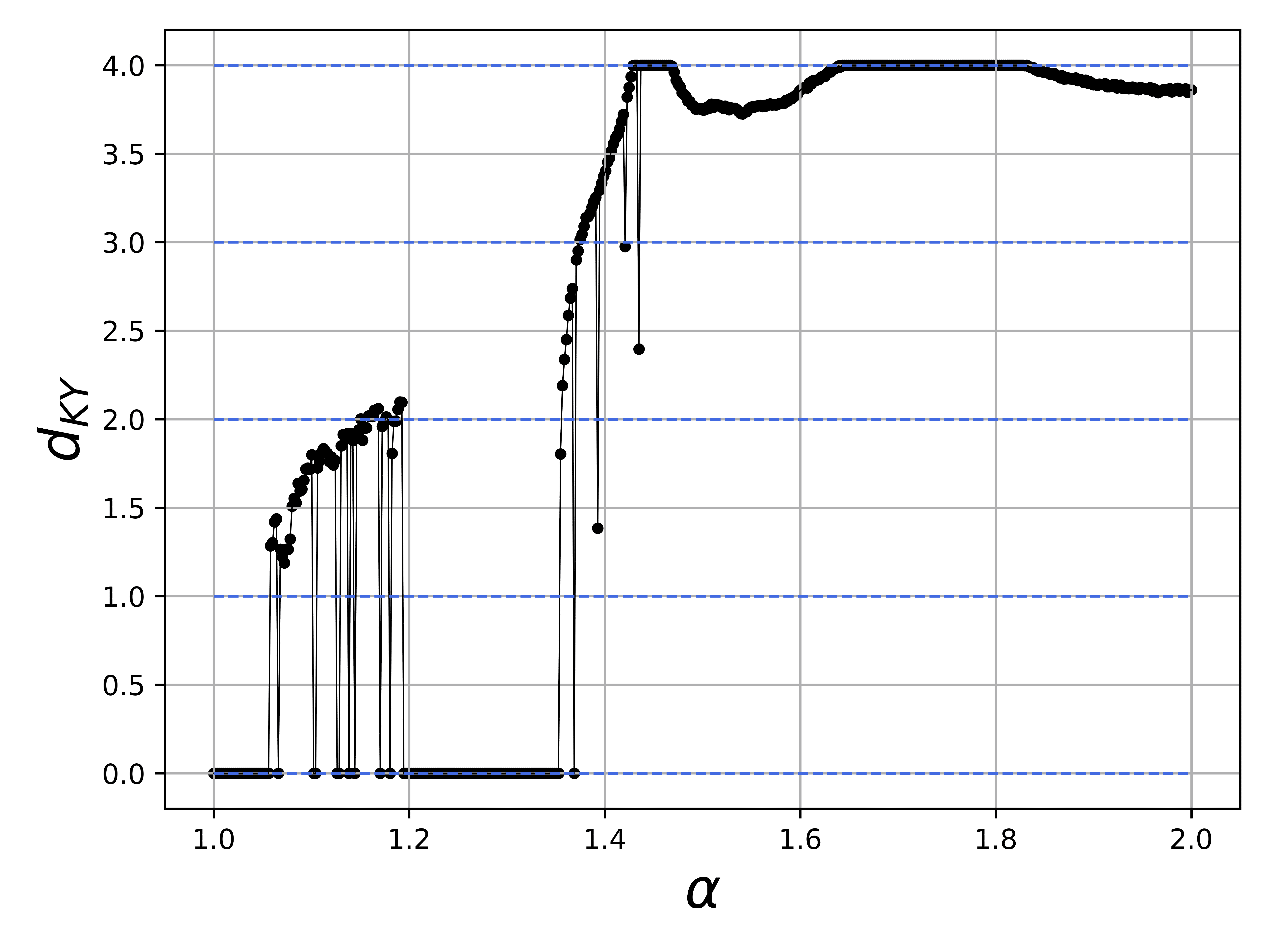}
\caption{Kaplan-Yorke dimension of the system \eqref{Rulkov4D} for $\alpha \in [1, 2]$ ($500$ values), $\beta = 4$, $\mu = \nu = 0.1$, $\sigma = \rho = 0.5$. Lyapunov exponents are computed under $1.000.000$ iterations starting from a set of random initial conditions $(x_0, y_0, z_0, \omega_0) \in (0, 1)^4$. The horizontal lines $d_{KY} = 0, \dots, 4$ are outlined in blue in order to better visualize when $d_{KY} \in \bb{N}$. The Kaplan-Yorke dimension associated to the attractor shown in Figure \ref{attractor_tot} is $d_{KY} \in [1.79, 1.82]$.}
\label{kaplan_dim}
\end{figure}

Values of the Kaplan-Yorke dimension generally agree with the appearing and disappearing of the attractor as $\alpha$ varies. As a pair of examples, in Figure \ref{kaplan_example1} they are shown the time series $x_n$, $z_n$ for $\alpha = 1.3$ and associated $d_{KY} = 0$, i.e. the attractor is a set of measure zero, while in Figure \ref{kaplan_example2} they are shown the time series $x_n$, $z_n$ for $\alpha = 1.7$ and associated $d_{KY} = 4$, i.e. the motion is essentially ergodic in a subset of $\bb{R}^4$. This last property can be appreciated in Figure \ref{kaplan_example2_xz}, where any fractal structure encountered in the previous case is replaced by ergodic motion.

\begin{figure}[h!]
\centering
\includegraphics[width=8.5cm]{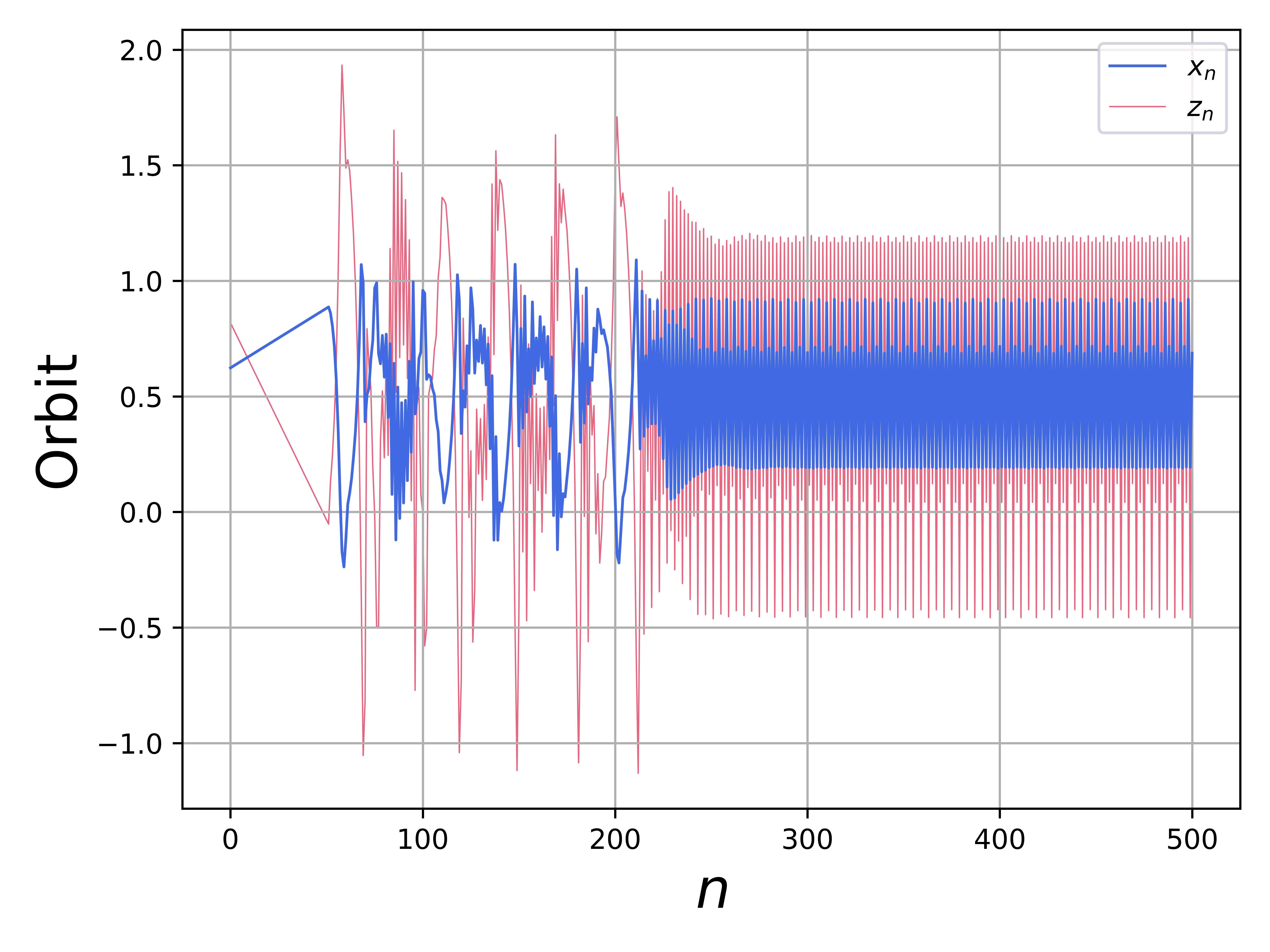}
\caption{Periodic time series reached by \eqref{Rulkov4D} after transient chaos for $\alpha = 1.3$ and Kaplan-Yorke dimension $d_{KY} = 0$. Other parameters: $\beta = 4$, $\mu = \nu = 0.1$, $\sigma = \rho = 0.5$. The simulation run for $500$ iterations.}
\label{kaplan_example1}
\end{figure}

\begin{figure}[h!]
\centering
\subfloat[][\label{kaplan_example2}]
{\includegraphics[width=.45\textwidth]{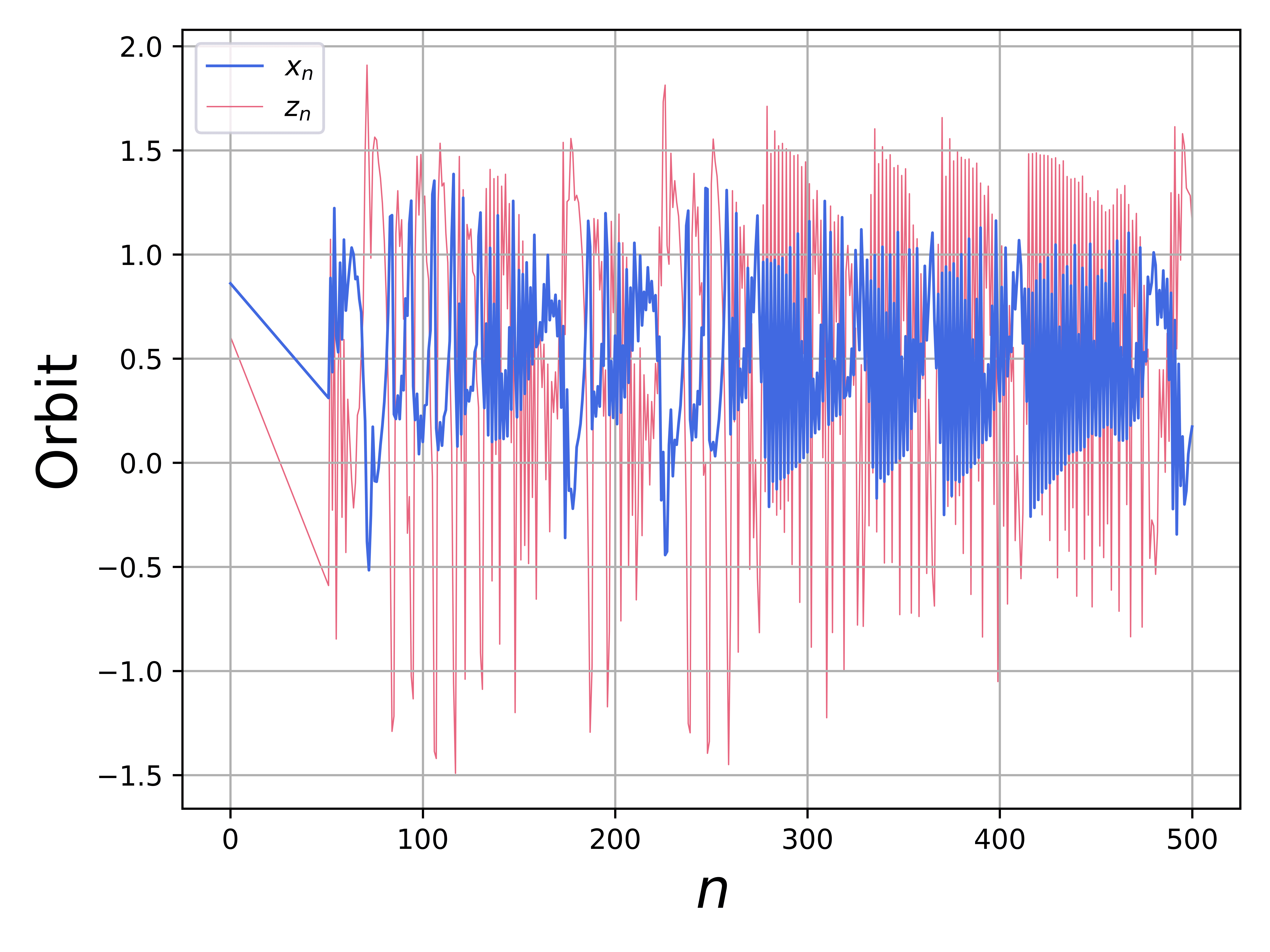}} \quad
\subfloat[][\label{kaplan_example2_xz}]
{\includegraphics[width=.45\textwidth]{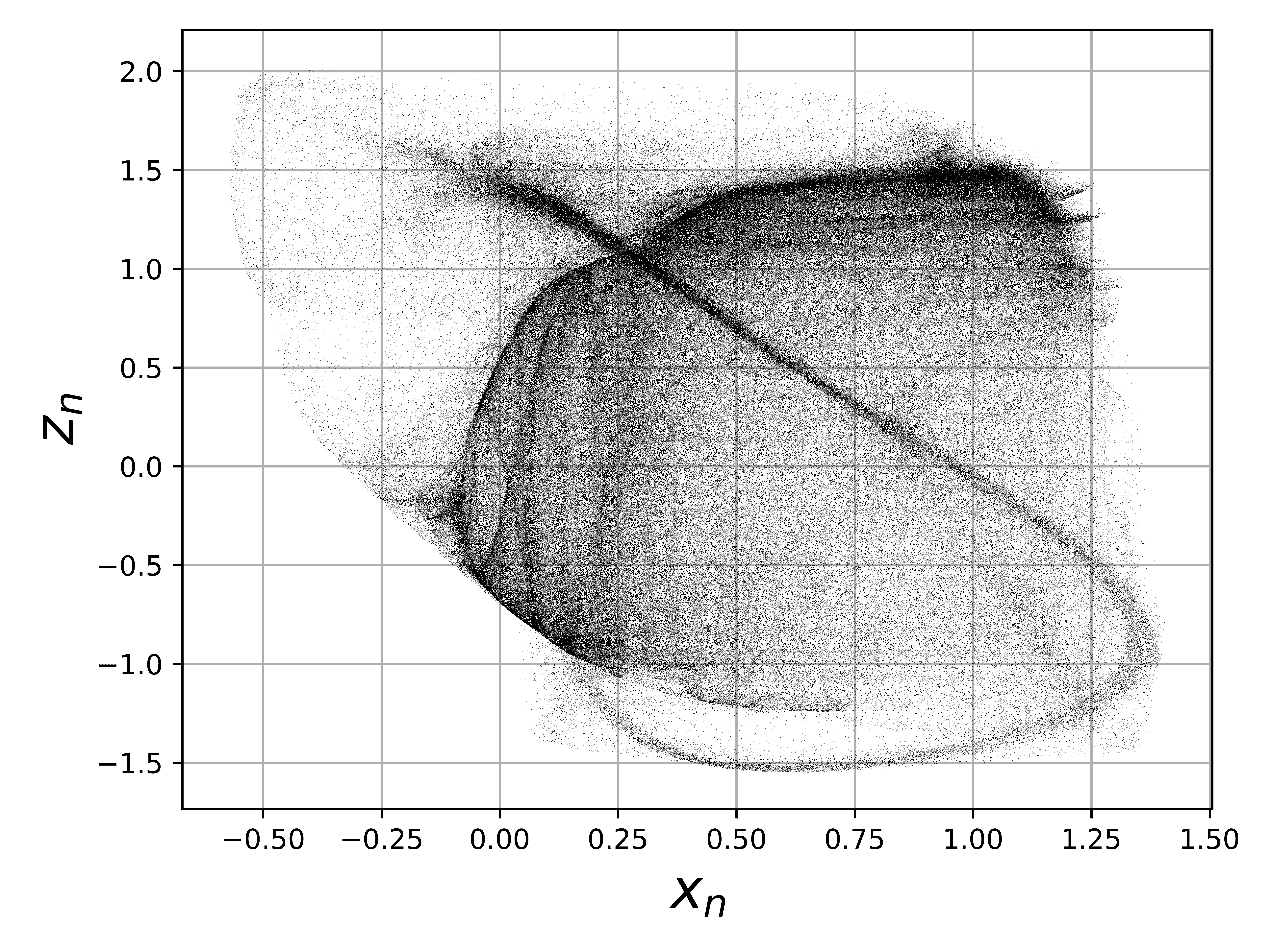}}
\caption{a) Chaotic time series of \eqref{Rulkov4D} for $\alpha = 1.7$ and Kaplan-Yorke dimension $d_{KY} = 4$. Other parameters: $\beta = 4$, $\mu = \nu = 0.1$, $\sigma = \rho = 0.5$. The simulation run for $500$ iterations. b) Chaotic attractor of \eqref{Rulkov4D} for $\alpha = 1.7$, see a). The simulation run for $5.000.000$ iterations.}
\label{kaplan_example2_tot}
\end{figure}

It is worth to notice that, in the regime analyzed up to now, the coupling \eqref{Rulkov4D} does not exhibit the classical bursting-firing phenomenon, that is replaced by an alternation between more or less aggressive chaotic firing. Therefore, it is important to compare the previous results to analogous numerical simulations for the classical regime, that translates in our framework in taking both $\mu, \nu \ll 1$. Time series shown in Figure \ref{time_small_munu} qualitatively confirm the agreement with the classical bursting-firing exhibited by the Rulkov map for small perturbations. It is important to remark that the two membrane potentials are always taken such that one is below and the other one is above the chaotic threshold for the Rulkov map.

\begin{figure}[h!]
\centering
\includegraphics[width=8.5cm]{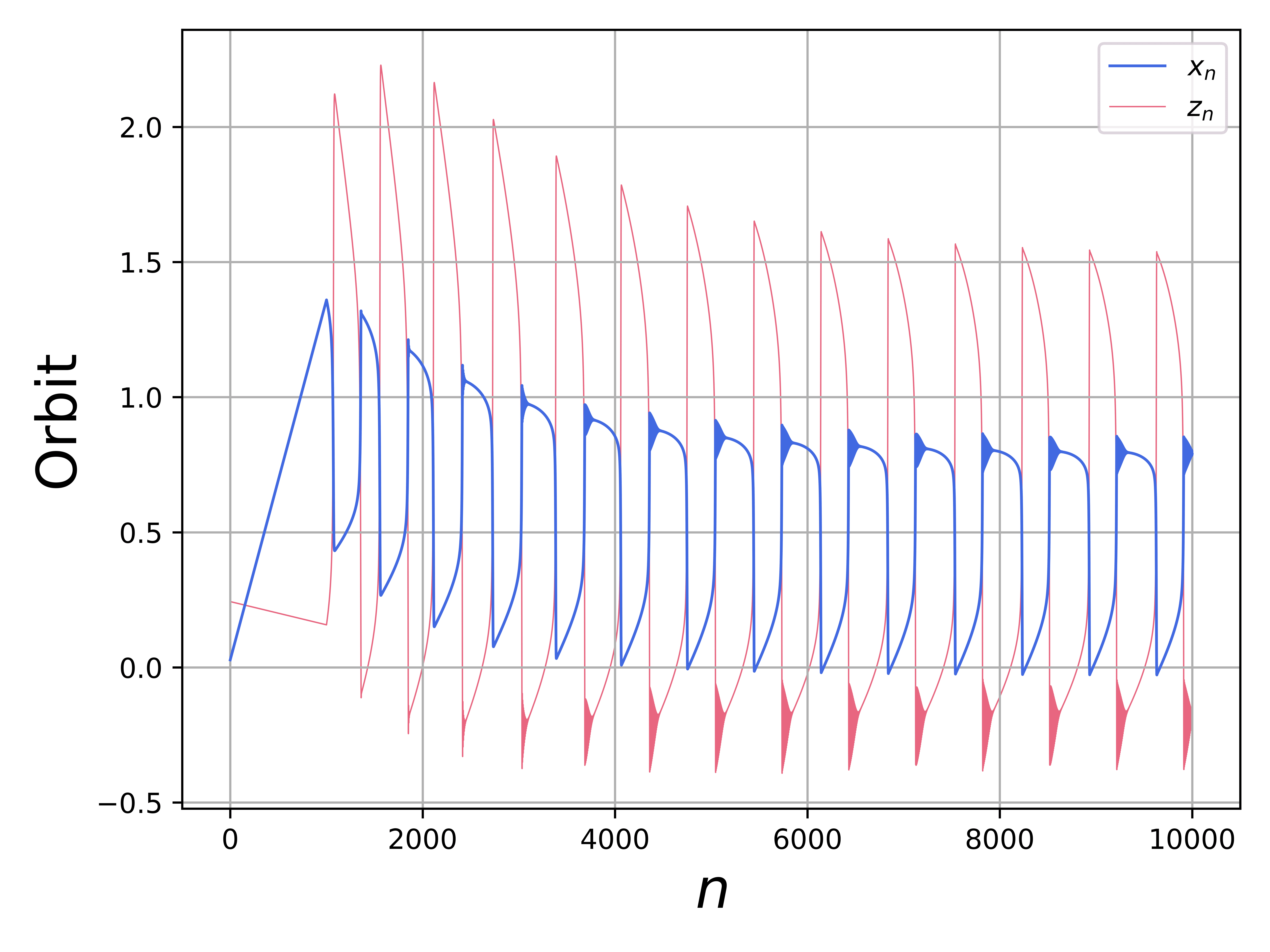}
\caption{Time series of \eqref{Rulkov4D} for $\mu, \nu \ll 1$. Here, we recognize the classical bursting-firing regime for both fast-variables $x, z$. Parameters: $\alpha = 1.1$, $\beta = 4$, $\sigma = \rho = 0.5$, $\mu = \nu = 0.001$. The simulation run for $10.000$ iterations.}
\label{time_small_munu}
\end{figure}

\subsection{Lyapunov exponents spectra, bifurcation diagrams, basins of attraction}\label{sec_num_lyap}
In Figure \ref{lyap_alpha_b4m01s05} - \ref{lyap_mu_a11b4s05} they are shown Lyapunov exponents spectra of \eqref{Rulkov4D}. For these simulations, we adapt some of the several Python scripts presented in the exhaustive review \cite{LeGandhi2024}.

\begin{figure}[h!]
\centering
\includegraphics[width=8.5cm]{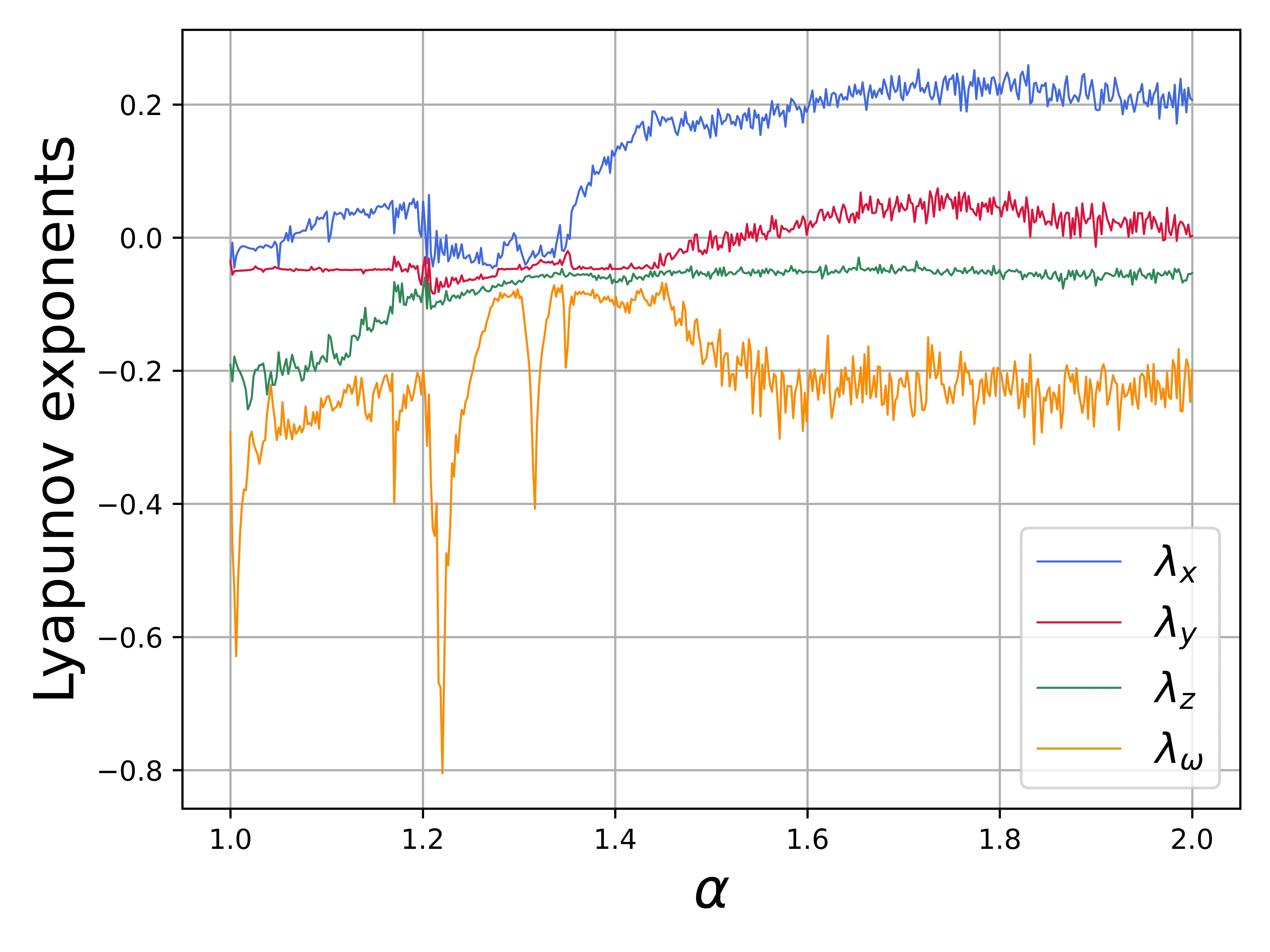}
\caption{Lyapunov exponents spectra of \eqref{Rulkov4D} for $\alpha \in (1, 2)$ ($500$ values), $\beta = 4$, $\mu = \nu = 0.1$, $\sigma = \rho = 0.5$. Lyapunov exponents are computed under $1.000$ iterations.}
\label{lyap_alpha_b4m01s05}
\end{figure}

\begin{figure}[h!]
\centering
\includegraphics[width=8.5cm]{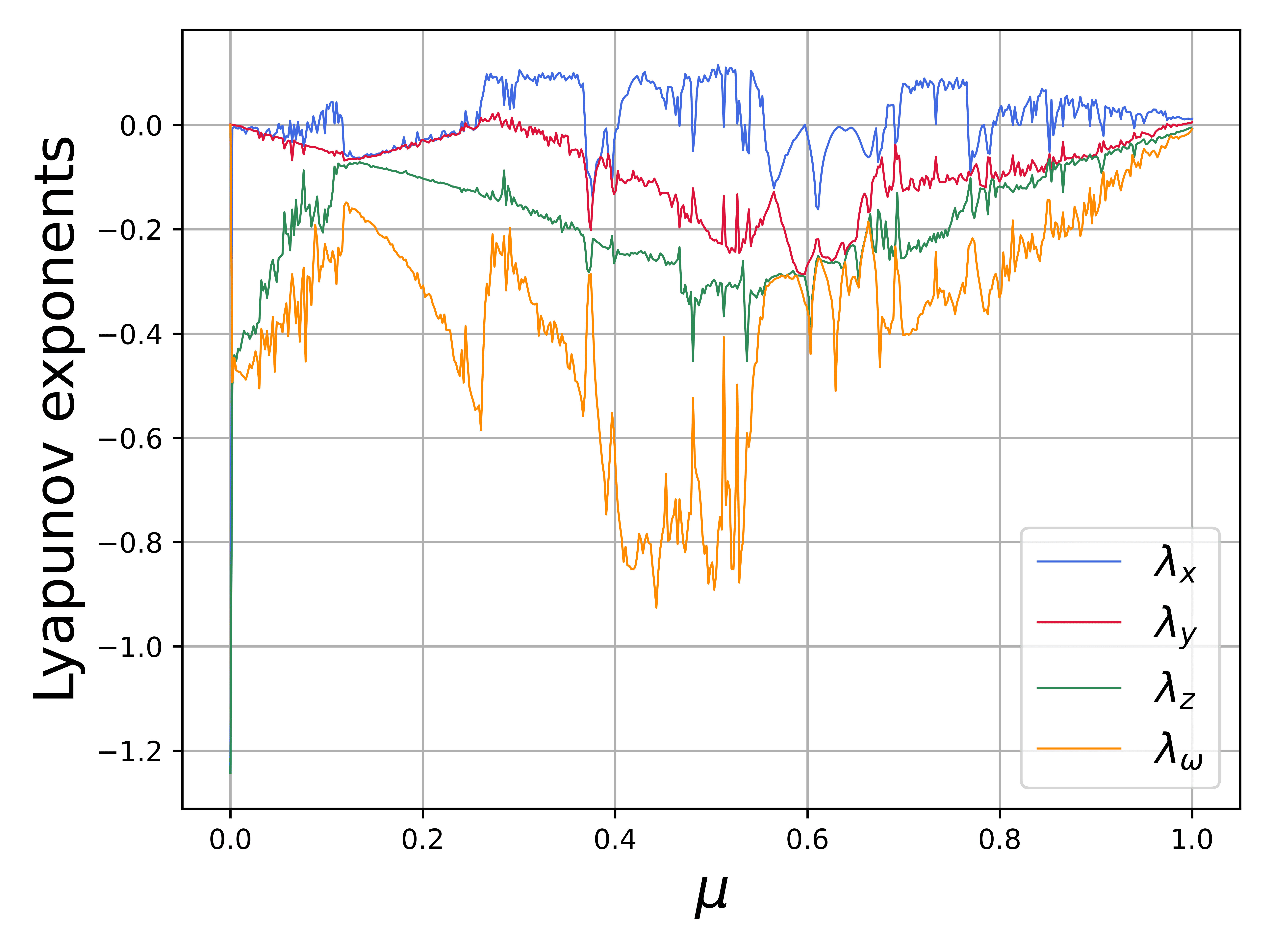}
\caption{Lyapunov exponents spectra of \eqref{Rulkov4D} for $\mu = \nu \in (0, 1)$ ($500$ values), $\alpha = 1.1$, $\beta = 4$, $\sigma = \rho = 0.5$. Lyapunov exponents are computed under $1.000$ iterations.}
\label{lyap_mu_a11b4s05}
\end{figure}

Firstly, it is worth to notice that in Figure \ref{lyap_mu_a11b4s05} all Lyapunov exponents approaches to zero when $\mu \to 1^-$ and this agrees with the general divergence for the motion of \eqref{Rulkov4D} when $\mu, \nu > 1$, that can be numerically checked.

Another property of the Lyapunov exponents spectra is easily explained. The parameter $\beta$ is the membrane potential associated to the subsystem $(z, \omega)$; however, given the formal structure of the cross coupling, at each step $\beta$ influences the subsystem $(x, y)$. Therefore, it is clear that for $\beta = 4$, that is the threshold for the Rulkov map to exhibit chaos, one Lyapunov exponent of the subsystem $(x, y)$ is generally positive. For the same reason, both Lyapunov exponents of the subsystem $(z, \omega)$ tends to be negative since $\alpha = 1.1$ is lower than the chaotic threshold $\alpha = 4$. If we invert the values of $\alpha$, $\beta$, we end up to an inverted situation, with the exponents $\lambda_x$, $\lambda_y$ being generally negative and one of the exponents $\lambda_z, \lambda_\omega$ being generally positive.

Now, we perform a general bifurcation analysis of the system \eqref{Rulkov4D}, focusing on the parameters regime associated to the attractor shown in Figure \ref{attractor_tot}. In all the following simulations, the first $90 \%$ of transient points are discarded.

In Figure \ref{1Dbif_tot} we present 1D bifurcation diagrams for $\alpha \in (0.5, 1.5)$, $\beta = 4$, $\mu = \nu = 0.1$, $\sigma = \rho = 0.5$, showing for each state variable the arising of the usual Feigenbaum attractor for period doubling bifurcations together with islands of stability.

\begin{figure}[h!]
\centering
\subfloat[][\label{1Dbif_alfa_x_b4m01s05}]
{\includegraphics[width=.45\textwidth]{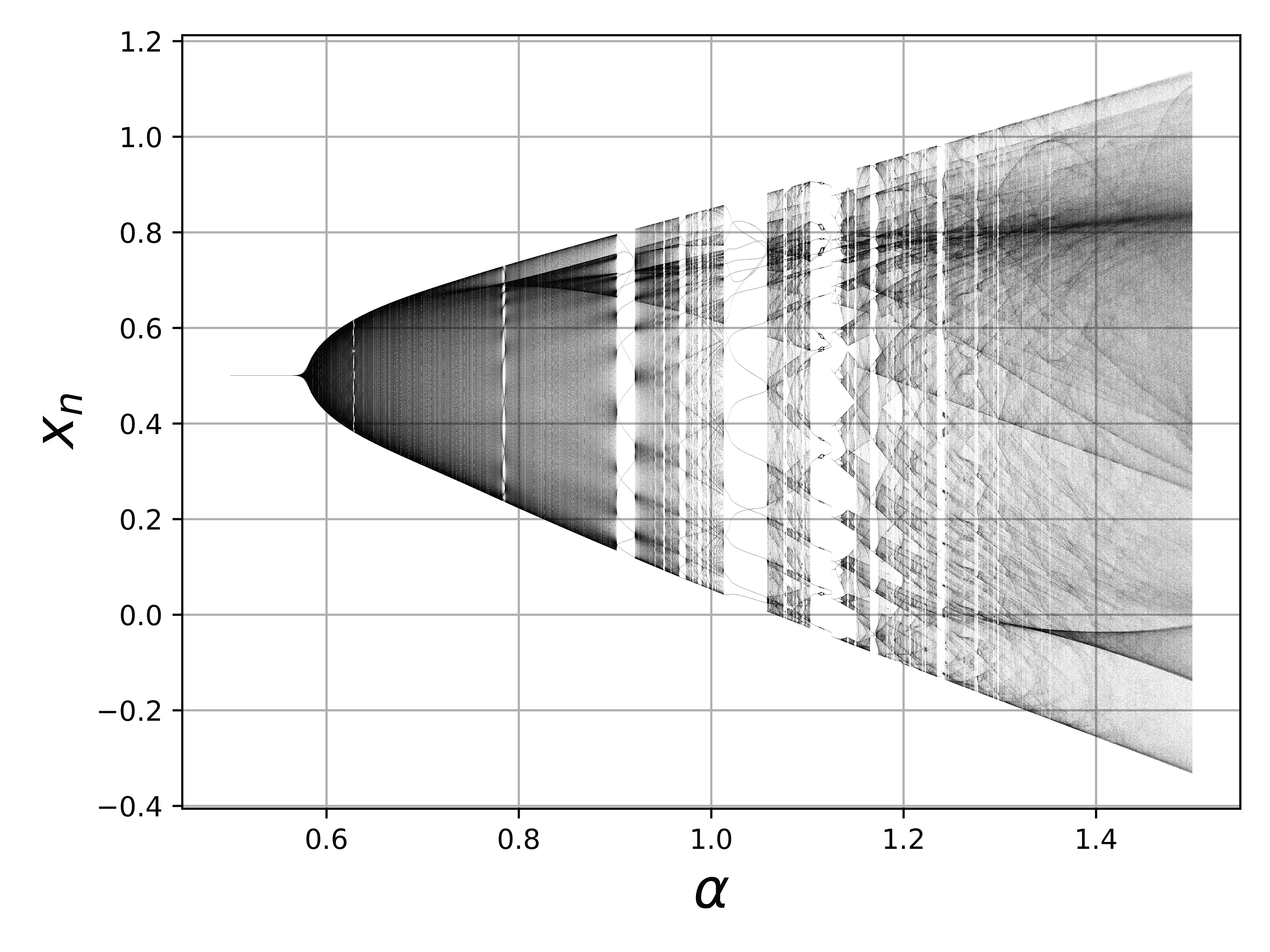}} \quad
\subfloat[][\label{1Dbif_alfa_y_b4m01s05}]
{\includegraphics[width=.45\textwidth]{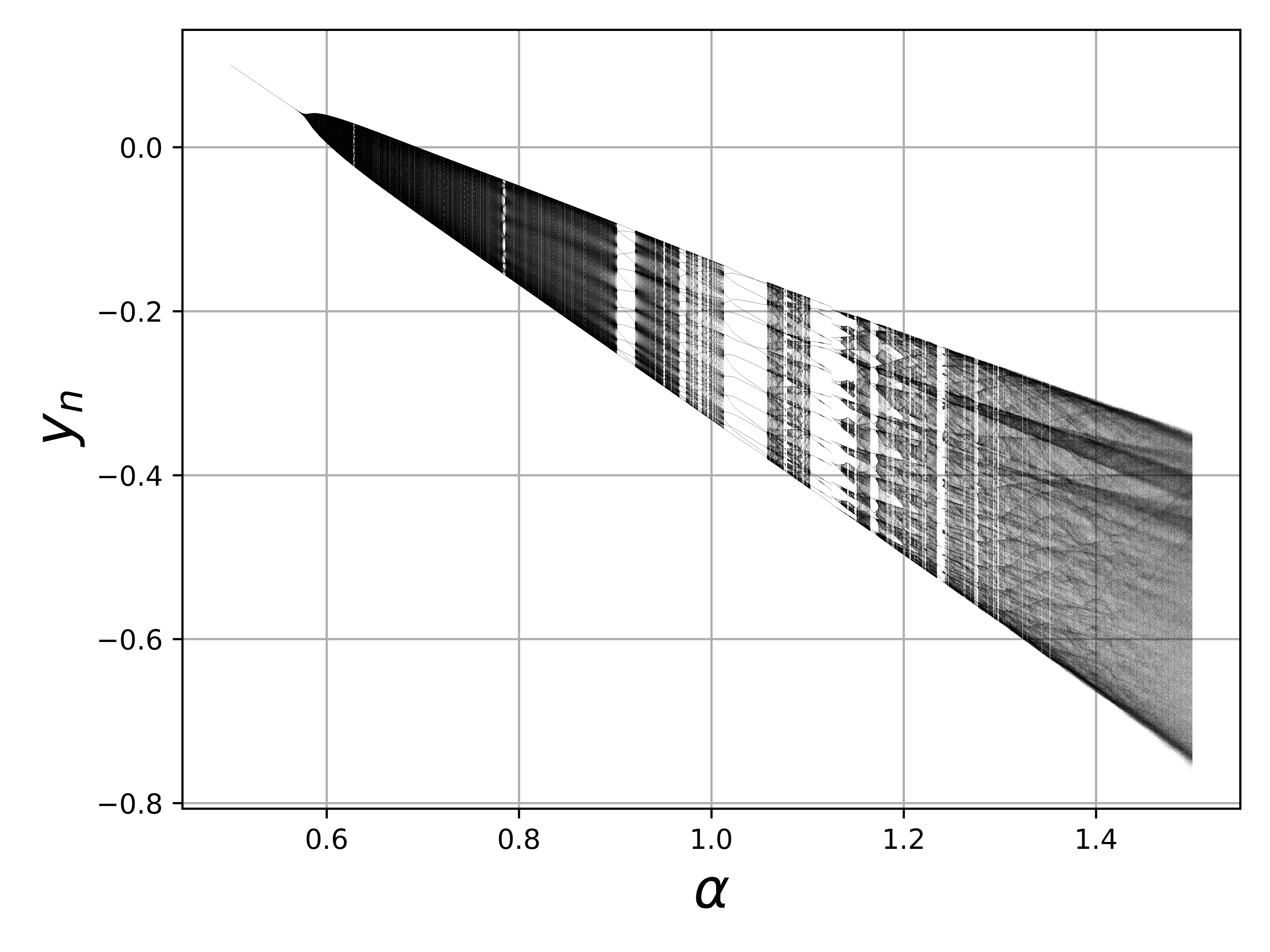}} \\
\subfloat[][\label{1Dbif_alfa_z_b4m01s05}]
{\includegraphics[width=.45\textwidth]{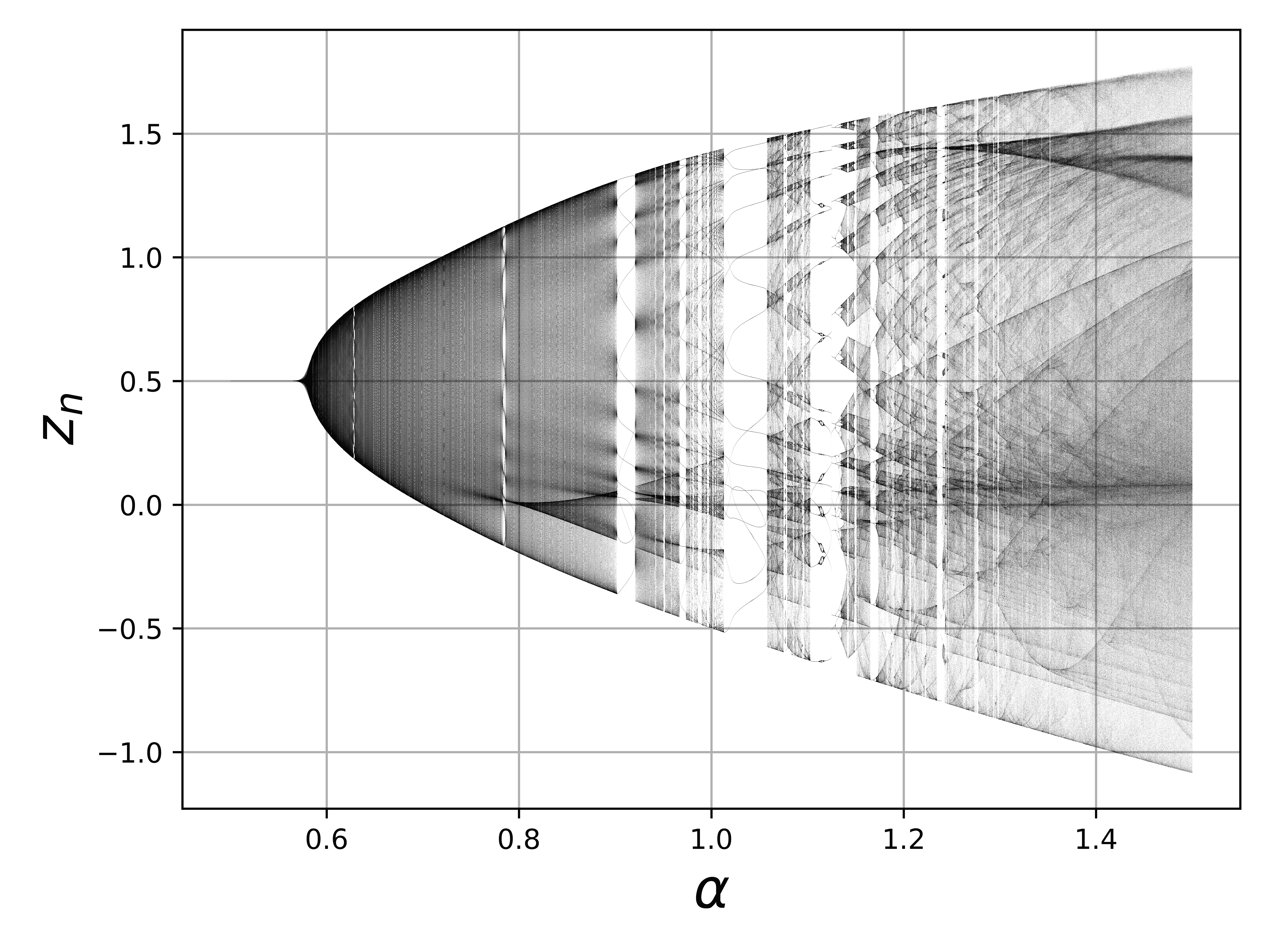}} \quad
\subfloat[][\label{1Dbif_alfa_w_b4m01s05}]
{\includegraphics[width=.45\textwidth]{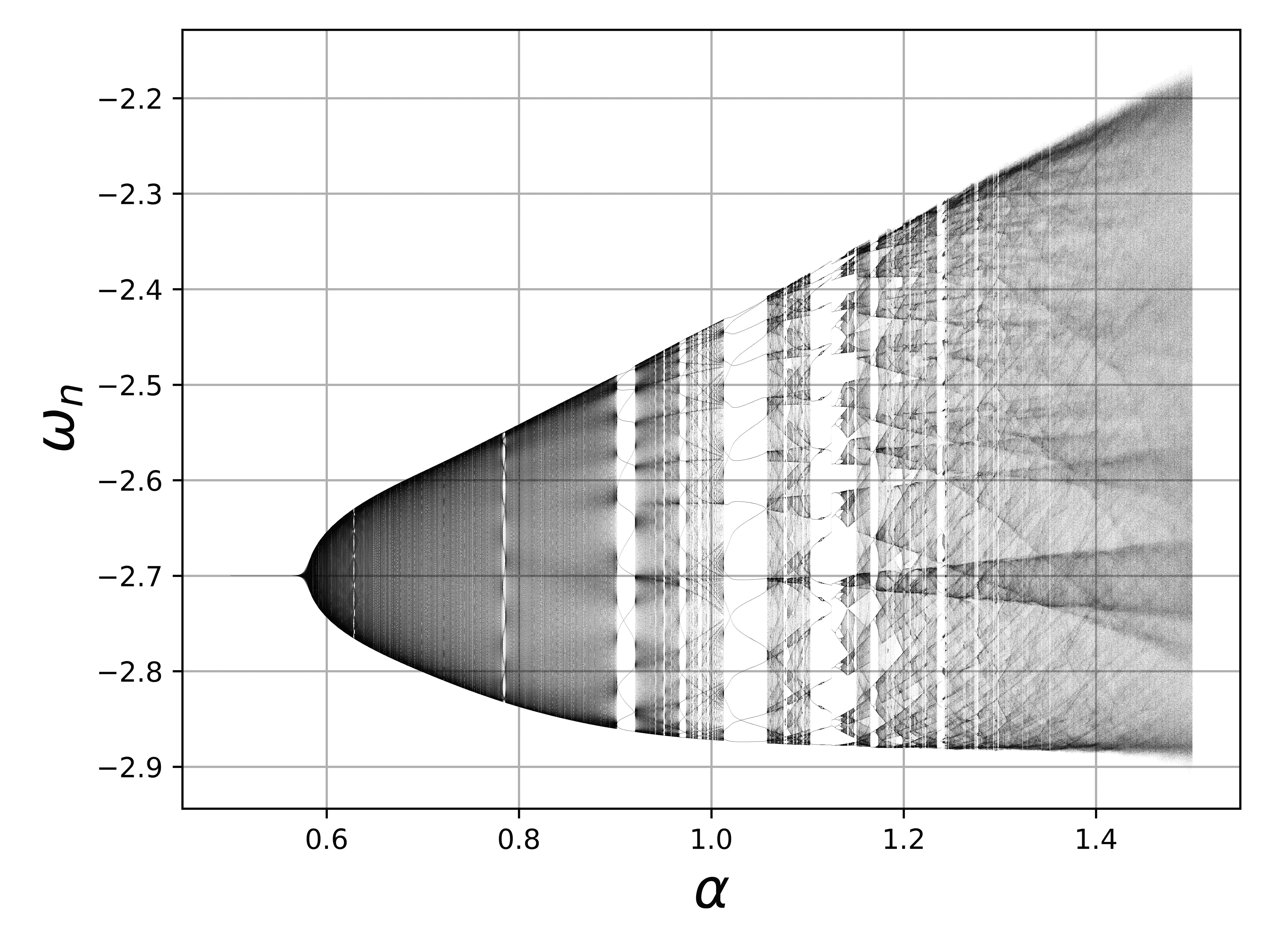}} \\
\caption{1D bifurcation diagrams for $\alpha \in (0.5, 1.5)$, $\beta = 4$, $\mu = \nu = 0.1$, $\sigma = \rho = 0.5$. The simulations run for $5.000$ iterations discarding the first $4.500$ transients. a) $(\alpha, x_n)$. b) $(\alpha, y_n)$. c) $(\alpha, z_n)$. d) $(\alpha, \omega_n)$.}
\label{1Dbif_tot}
\end{figure}

In Figure \ref{2Dbif_alfa_beta_m01s05} - \ref{2Dbif_sigma_rho_a11b4m01} we present 2D bifurcation diagrams for the system \eqref{Rulkov4D}. In the following simulations, two variables vary on the horizontal and vertical axis, while color identifies the magnitude of the MLE for a given pair of parameters. Choices on the fixed parameters are taken in order to compare the bifurcation diagrams one with the other and with the chaotic attractor shown in Figure \ref{attractor_tot}. The simulations implement the same Python script used for the spectra of Lyapunov exponents, returning the MLE after $100$ iterations on a $750 \times 750$ grid for the parameters.

In Figure \ref{2Dbif_alfa_beta_m01s05} it is presented the 2D bifurcation diagram $(\alpha, \beta)$ for $\alpha, \beta \in (1, 4)$, $\mu = \nu = 0.1$, $\sigma = \rho = 0.5$. Since the parameters $\alpha$, $\beta$ model the membrane potentials acting on the two neurons, Figure \ref{2Dbif_alfa_beta_m01s05} provides a visualization of the interaction between the two potentials for fixed perturbations and external currents.

\begin{figure}[h!]
\centering
\includegraphics[width=8.5cm]{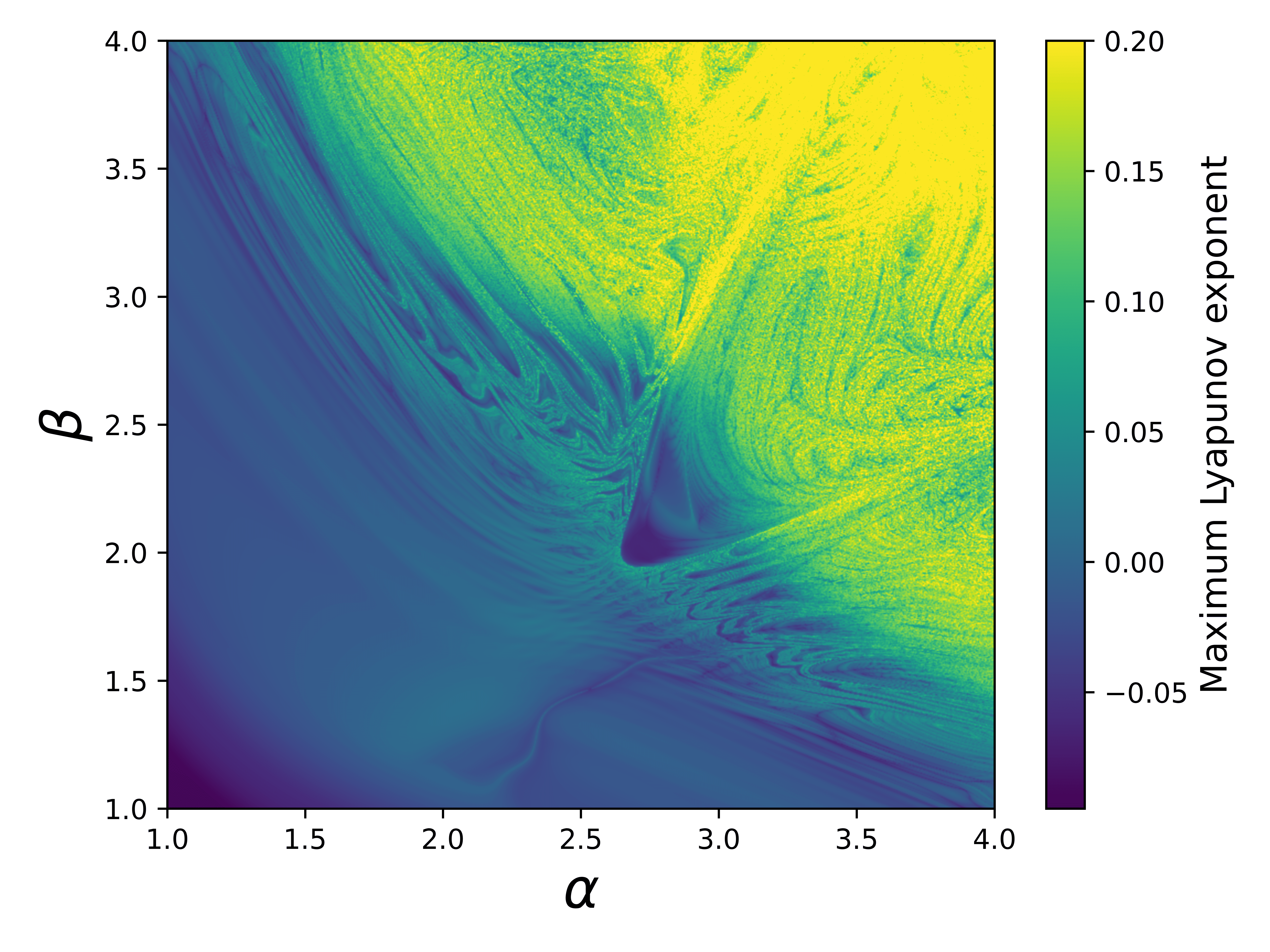}
\caption{2D bifurcation diagram $(\alpha, \beta)$ for $\alpha, \beta \in (1, 4)$, $\mu = \nu = 0.1$, $\sigma = \rho = 0.5$. The simulation returns the MLE after $100$ iterations on a $750 \times 750$ grid for the parameters.}
\label{2Dbif_alfa_beta_m01s05}
\end{figure}

The attractor found in Figure \ref{attractor_tot} arises in a regime of identical perturbations and external currents, so it is interesting to analyze the role of the symmetry between the two perturbations $\mu$, $\nu$. The bifurcation diagram shown in Figure \ref{2Dbif_mu_nu_a11b4} provides this kind of visualization, for $\mu, \nu \in (0,1)$, $\alpha = 1.1$, $\beta = 4$, $\sigma = \rho = 0.5$.

\begin{figure}[h!]
\centering
\includegraphics[width=8.5cm]{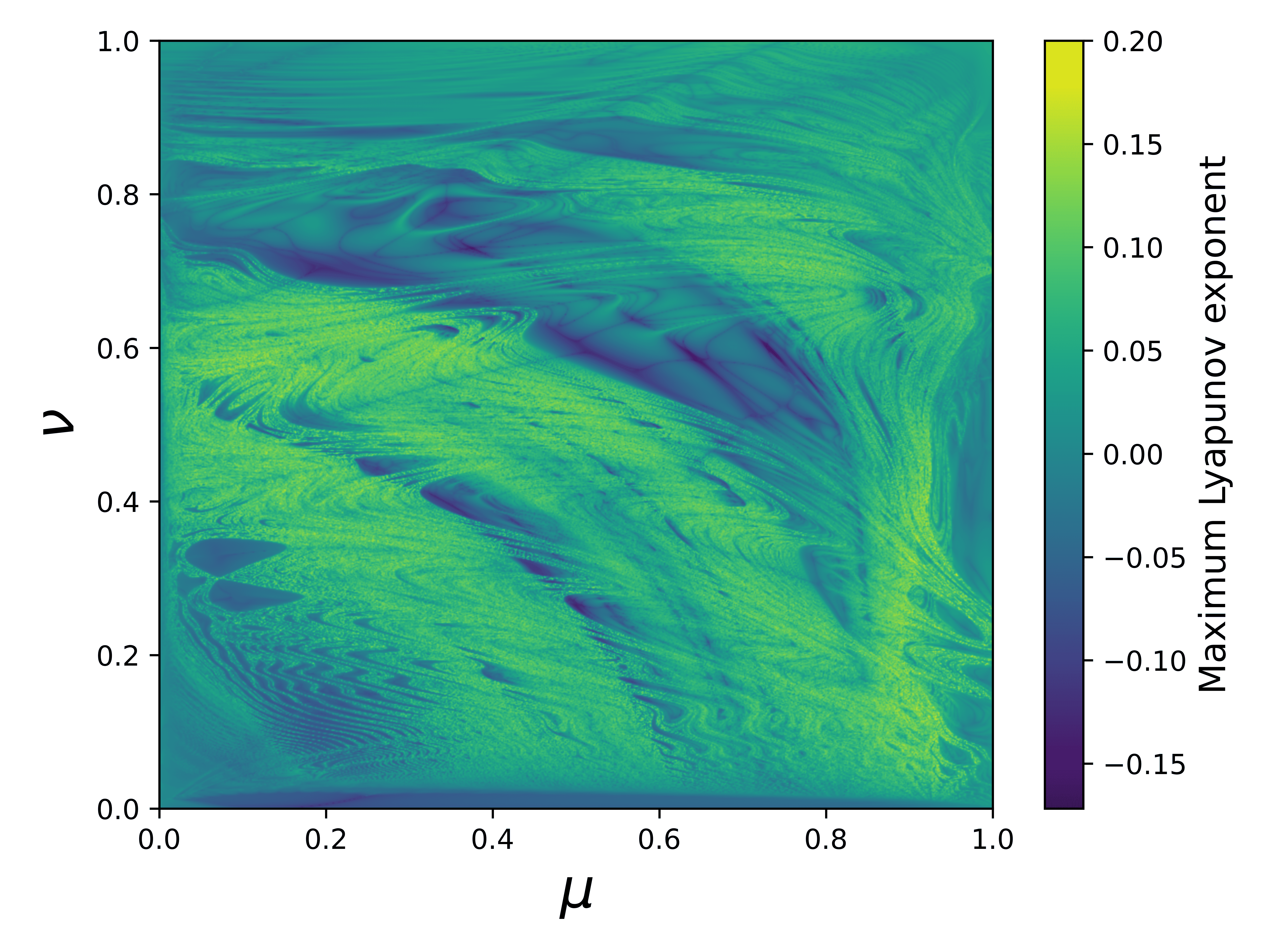}
\caption{2D bifurcation diagram $(\mu, \nu)$ for $(\mu, \nu)$ for $\mu, \nu \in (0,1)$, $\alpha = 1.1$, $\beta = 4$, $\sigma = \rho = 0.5$. The simulation returns the MLE after $100$ iterations on a $750 \times 750$ grid for the parameters.}
\label{2Dbif_mu_nu_a11b4}
\end{figure}

In Figure \ref{2Dbif_mu_alfa_b4n01} it is presented the 2D bifurcation diagram $(\mu, \alpha)$ for $\mu \in (0, 1)$, $\alpha \in (1, 4)$, $\beta = 4$, $\nu = 0.1$, $\sigma = \rho = 0.5$, that allows an analysis of the role of a single perturbation combined with the value of the membrane potential $\alpha$. Analogously, in Figure \ref{2Dbif_nu_alfa_b4m01} it is presented the 2D bifurcation diagram $(\nu, \alpha)$ for $\nu \in (0, 1)$, $\alpha \in (1, 4)$, $\beta = 4$, $\mu = 0.1$, $\sigma = \rho = 0.5$.

\begin{figure}[h!]
\centering
\subfloat[][\label{2Dbif_mu_alfa_b4n01}]
{\includegraphics[width=.45\textwidth]{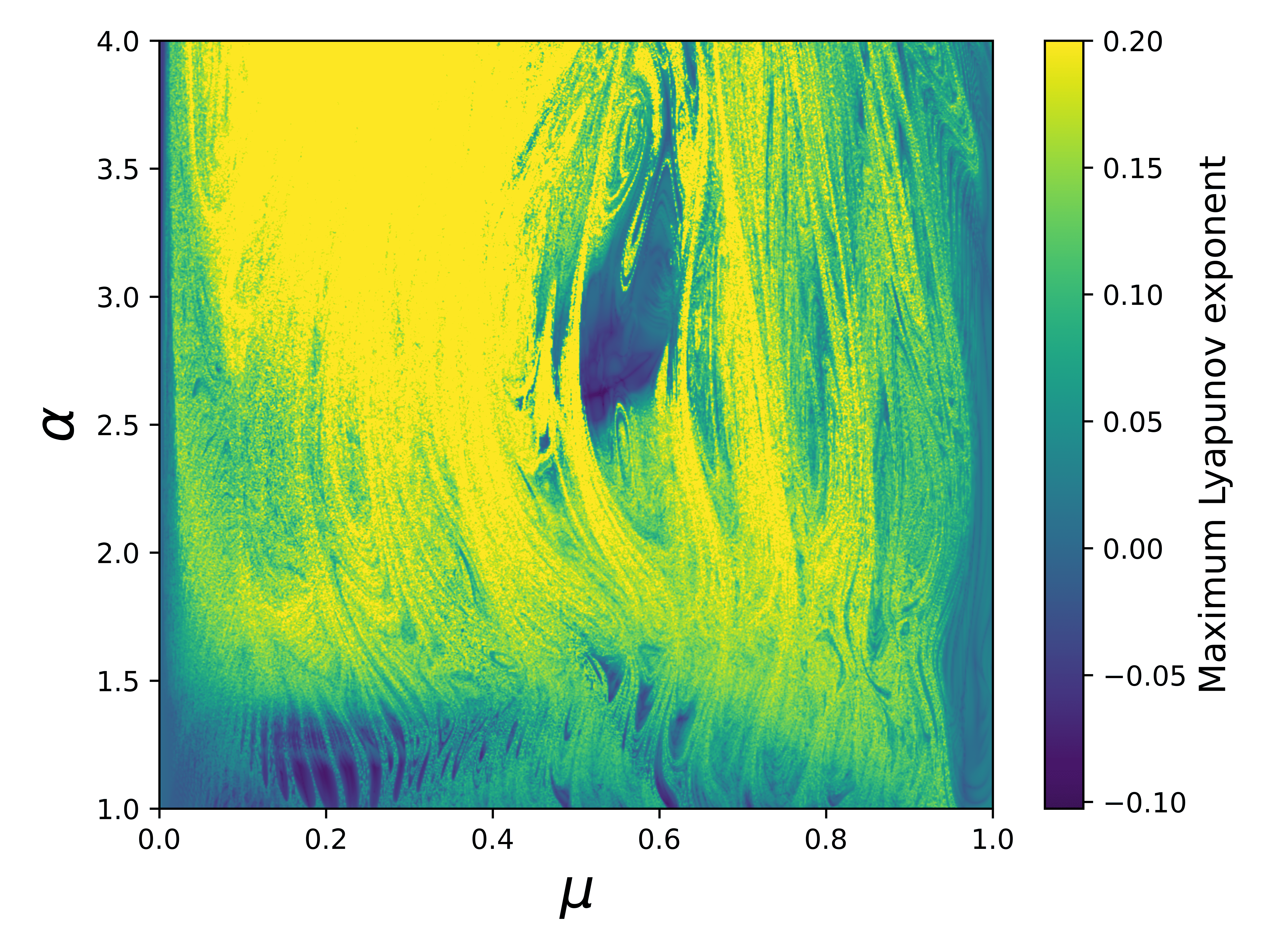}} \quad
\subfloat[][\label{2Dbif_nu_alfa_b4m01}]
{\includegraphics[width=.45\textwidth]{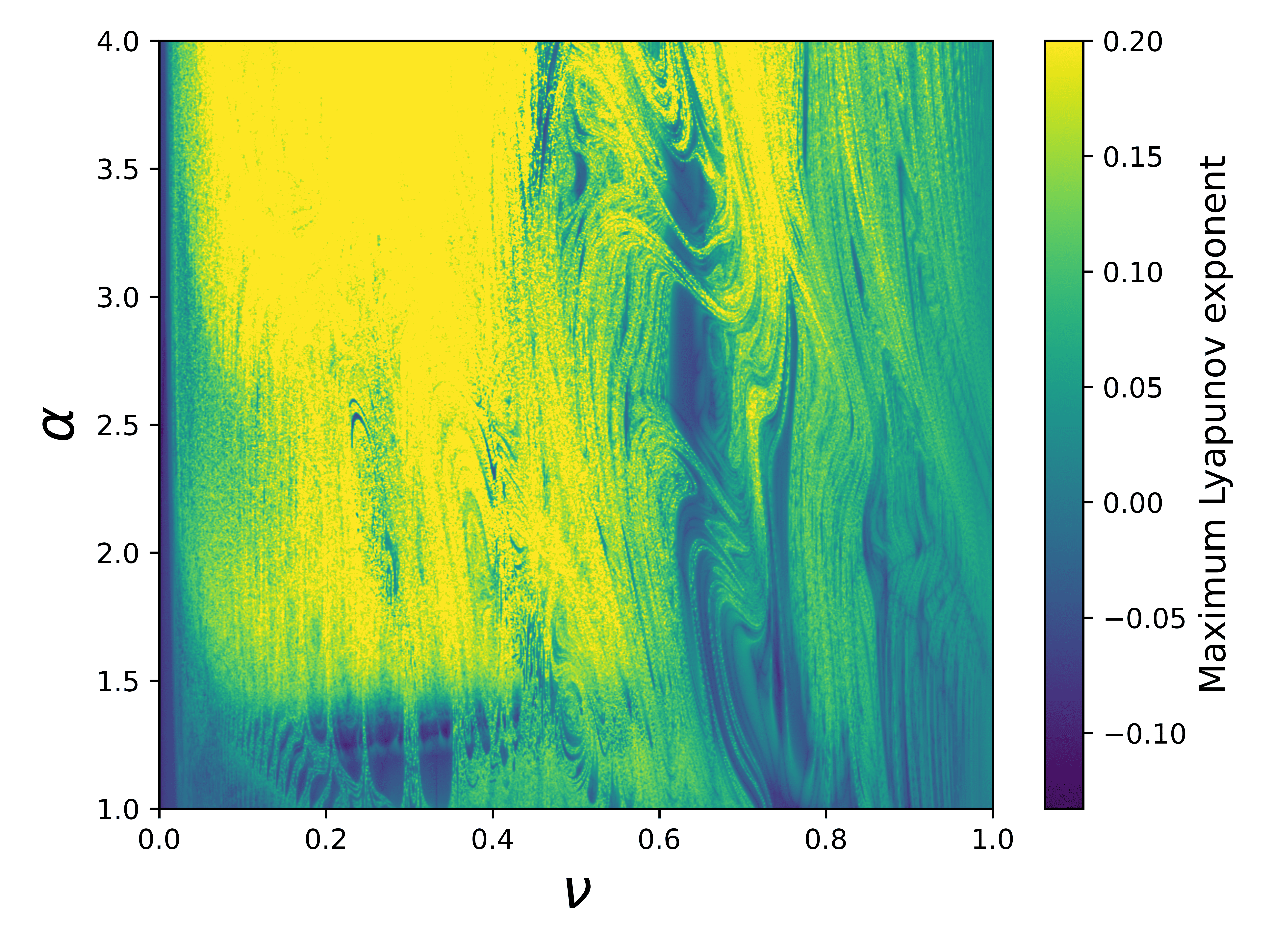}} \\
\caption{The simulations return the MLE after $100$ iterations on a $750 \times 750$ grid for the parameters. a) 2D bifurcation diagram $(\mu, \alpha)$ for $\mu \in (0, 1)$, $\alpha \in (1, 4)$, $\beta = 4$, $\nu = 0.1$, $\sigma = \rho = 0.5$. b) 2D bifurcation diagram $(\nu, \alpha)$ for $\nu \in (0, 1)$, $\alpha \in (1, 4)$, $\beta = 4$, $\mu = 0.1$, $\sigma = \rho = 0.5$.}
\label{2Dbif_mu_alfa_nu_alfa}
\end{figure}

Finally, in Figure \ref{2Dbif_sigma_rho_a11b4m01} it is presented the 2D bifurcation diagram $(\sigma, \rho)$ for $\sigma, \rho \in (0, 1)$, $\alpha = 1.1$, $\beta = 4$, $\mu = \nu = 0.1$, that allows an analysis on the role on the chaotic behavior of the coupling played by the two external currents $\sigma$, $\rho$.

\begin{figure}[h!]
\centering
\includegraphics[width=8.5cm]{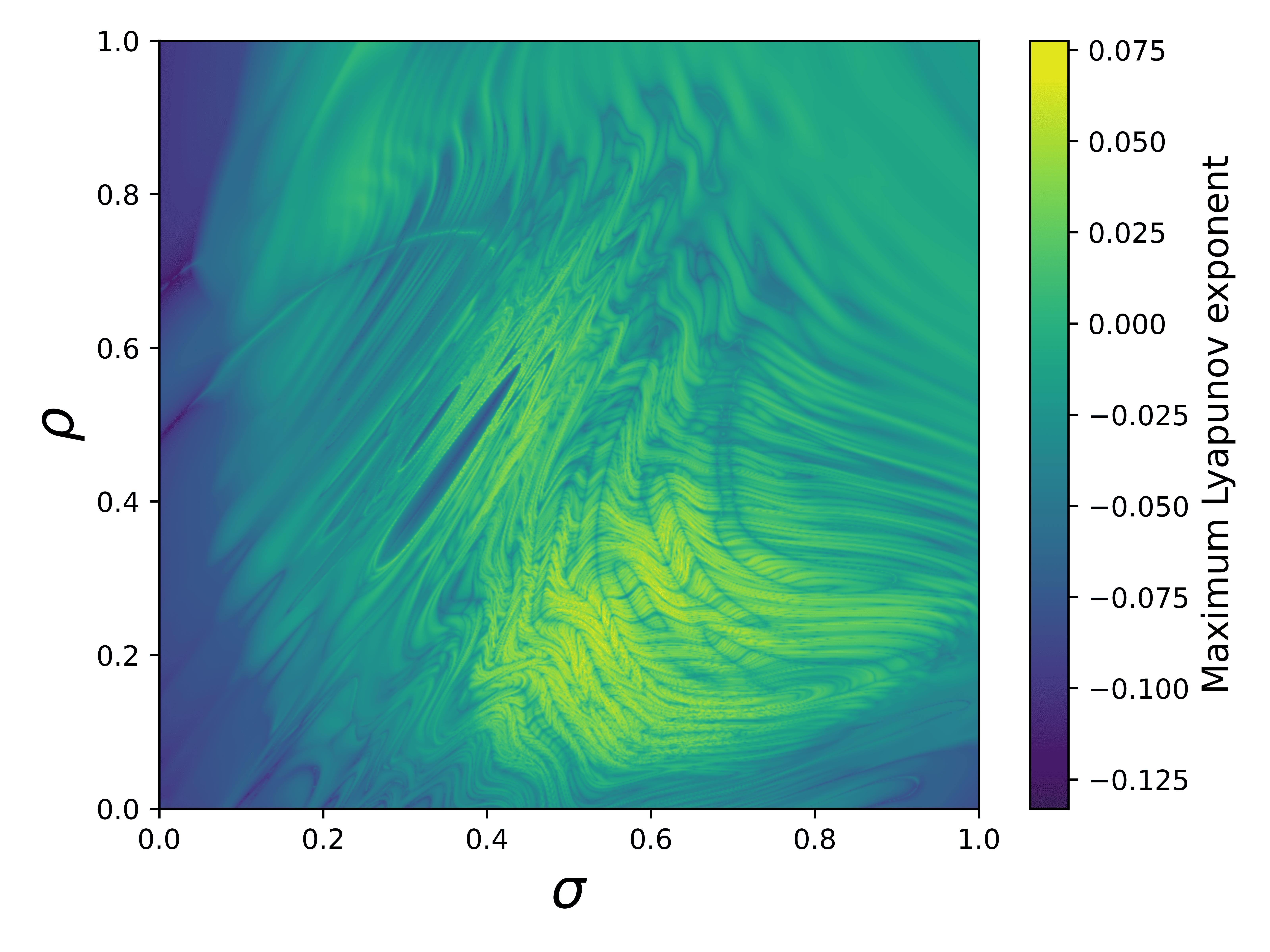}
\caption{2D bifurcation diagram $(\sigma, \rho)$ for $\sigma, \rho \in (0, 1)$, $\alpha = 1.1$, $\beta = 4$, $\mu = \nu = 0.1$. The simulation returns the MLE after $100$ iterations on a $750 \times 750$ grid for the parameters.}
\label{2Dbif_sigma_rho_a11b4m01}
\end{figure}

We can notice that the bifurcation diagram $(\sigma, \rho)$ shown in Figure \ref{2Dbif_sigma_rho_a11b4m01} exhibits a very small (positive) MLE on the entire grid $(\sigma, \rho) \in (0, 1) \times (0, 1)$, and this is completely reasonable in the context of the other simulations. Indeed, if we overlay all the diagrams around the parameter set associated to the attractor shown in Figure \ref{attractor_tot}, we see that the associated MLE is of the order $\sim 0.05$. This fact tells us that the strange attractor found in Figure \ref{attractor_tot} arises in a relatively weak chaotic regime. This is confirmed by Lyapunov exponents spectra that show just a positive Lyapunov exponents together with vanishing or negative values for the remaining three, i.e. we have expansion along one direction and contraction along the other three directions in the four-dimensional phase space.

Another consideration can be taken into account. The mean magnitude of the MLE in Figure \ref{2Dbif_sigma_rho_a11b4m01} suggests that, as for the two-dimensional Rulkov map \eqref{mapU}, the external currents do not play a dominant role in amplifying or suppressing the chaotic behavior for a given choice on membrane potentials and perturbations on the slow variables. Analytically, this follows immediately from the definition of the Jacobian, that depends on $\sigma$, $\rho$ only implicitly through the orbit. We may argue that this fact provides further support for the model: even though \eqref{mapC} is not a classical additive coupling that could be made arbitrary small, it seems to preserve the essential phenomenology of the original Rulkov model for one neuron.

Finally, we present basins of attraction of \eqref{Rulkov4D} for a choice on parameters that numerically suggests the arising of multistable states \cite{Pisarchik2014}, that is the coexistence of two or more different attractors solely depending on the initial conditions. This kind of behavior is confirmed by a fractal structure exhibited by the basins of attraction presented in Figure \ref{basins_alfa1205_res1000}. For this simulation, we implement the Julia script provided in \cite{Wagemakers2025}, taking $\alpha = 1.205$, $\beta = 4$, $\mu = \nu = 0.1$, $\sigma = \rho = 0.5$, $y_0 = \omega_0 = 0$ and $1.000$ initial values for $x_0, z_0 \in [-1, 1]$. We remark that the basins of attraction shown in Figure \ref{basins_alfa1205_res1000} correspond to three different regimes of transient chaos, generally followed by periodic orbits reached after a very large amount of iterations.

\begin{figure}[h!]
\centering
\includegraphics[width=8.5cm]{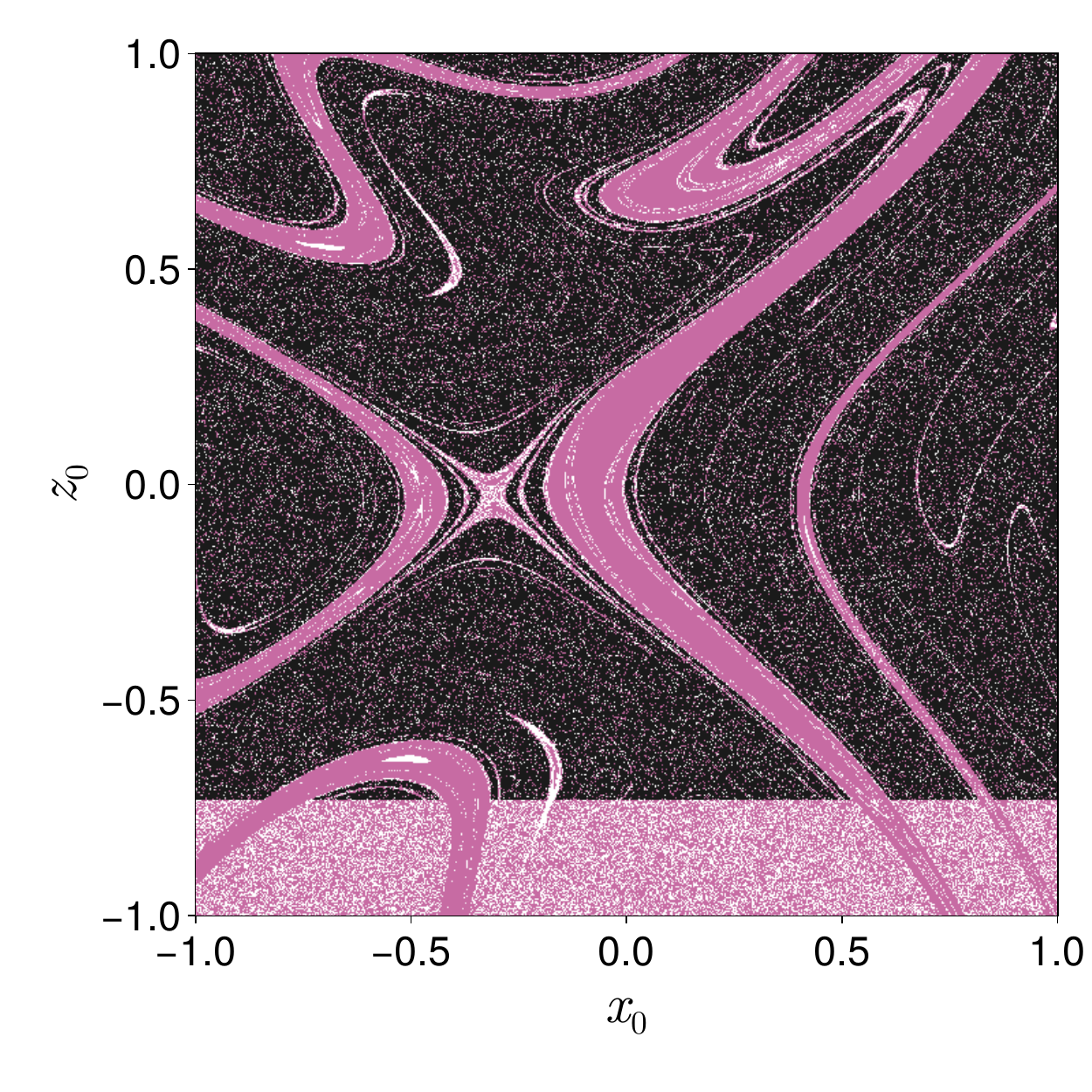}
\caption{Basins of attraction $(x_0, z_0)$ for $\alpha = 1.205$, $\beta = 4$, $\mu = \nu = 0.1$, $\sigma = \rho = 0.5$, $y_0 = \omega_0 = 0$. The simulation returns the three basins of attraction associated to $1.000$ initial conditions $x_0, z_0 \in [-1, 1]$, associated to three regimes of transient chaotic behavior.}
\label{basins_alfa1205_res1000}
\end{figure}

\section{Conclusions}\label{sec_conclusions}
In this paper we introduced a cross coupling of two Rulkov neural maps, that is structurally different from the usual couplings analyzed in the literature. We proposed a heuristic biological interpretation concerning the transition from small to large values on the perturbations acting on the slow-variables, conjecturing a neural description of parameters regimes that are usually discarded. We analytically proved that the cross coupling preserves the existence of an absorbing set and a snap-back repeller for the four-dimensional system, provided that they exist for the original two-dimensional map. We presented numerical simulations for the cross coupling of two standard chaotic Rulkov maps, confirming the general chaotic behavior of the four-dimensional system through analysis of time series, spectra of Lyapunov exponents, 1D and 2D bifurcation diagrams together with fractal basins of attraction. Furthermore, we numerically suggested the arising of a strange attractor by computing a non-integer Kaplan-Yorke dimension, that has been compared to some limit cases of periodic and ergodic orbits.

Even though the analytical formulation proposed in this work does not conform to orthodox neural models, we have theoretical and computational insights that it may be considered as a good dynamical model for the coupling of two neurons, also exhibiting some peculiar and interesting properties.

Among possible developments of this work, we outline the needing of extensive analytical and numerical studies for synchronization regimes \cite{Boccaletti2002} of the system \eqref{mapC}, in similar fashion to other couplings that have been analyzed in literature. The system \eqref{Rulkov4D} admits a trivial complete synchronization of the subsystems $(x, y)$, $(z, \omega)$ when $\alpha = \beta$, $\mu = \nu$, $\sigma = \rho$, for which the subsystems $(x, y)$, $(z, \omega)$ are synchronized if and only if they are initially synchronized, i.e. $(x_0, y_0) = (z_0, \omega_0)$. However, this is not the only synchronization regime (see e.g. Figure \ref{kaplan_example1}) and deriving precise regimes for complete and generalized synchronization does not represents an easy task, requiring a deep and aware use of the cocycle theory \cite{Arnold1998} that goes beyond the aim of this work.

We may think to explicitly construct an absorbing set for \eqref{Rulkov4D} by deriving proper constraints on the parameters such that the Rulkov map \eqref{mapU} is globally controlled by the Chialvo map, for which an absorbing set has been constructed in \cite{Pilarczyk2024}, and then using Theorem \ref{theo_cross1} for the cross coupling. Furthermore, we have numerical evidences of a crisis-induced intermittency exhibited by the strange attractor of the system, that alternatively appears and disappears under very slight changes on the parameters of the system. The genesis of this behavior may be addressed using similar techniques to those presented in \cite{Tanaka2005}.

Another interesting development that we would outline concerns an extension of the presented results to a cross coupling for an arbitrary number of Rulkov neurons. As a natural example, we propose the following one:
\beq\begin{cases}
x_{n+1}^{(j)} = \sum_{i = 1, i \ne j}^N \alpha_i f( x_n^{(i)} ) + y_n^{(j)} \\
y_{n+1}^{(j)} = y_n^{(j)} - \mu_j ( x_n^{(j)} - \sigma_j ) \,, \quad j = 1 \,, \dots \,, N \gg 1 \,.
\end{cases}\eeq
Even if Theorem \ref{theo_cross1} - \ref{theo_cross2} may be generalized to this case, numerical studies for this (generally huge) system require more powerful computational tools and a more delicate numerical analysis, that we will consider for future works.
\\ \\ \textbf{Declaration of competing interests} \\The author has no conflicts to disclose.
\\ \\ \textbf{Funding} \\This research was funded by the University of Ferrara, FIRD 2024.
\\ \\ \textbf{Data availability} \\The data that supports the funding of this study are available within the article.
\\ \\ \textbf{Declaration of generative AI and AI-assisted technologies in the manuscript preparation process} \\During the preparation of this work, the author used Google Gemini (Large Language Model) in order to be assisted in writing some Python scripts, as well as for references searches and grammar checking. After using this tool, the author reviewed and edited the content as needed and takes full responsibility for the content of the article.
\\ \\ \textbf{Acknowledgments} \\The author is deeply grateful to prof. Rafael Ortega Ríos (University of Granada) for useful discussions and suggestions concerning the analytical contents of the article. The author is also deeply grateful to prof. Vincenzo Coscia (University of Ferrara) for his constant support and encouragement.
\\ \\ \textbf{ORCID} \\Stefano Disca - \url{https://orcid.org/0009-0000-6511-5608}

\bibliographystyle{elsarticle-num}
\bibliography{Rulkov_biblio}

\end{document}